\makeatletter \@addtoreset{equation}{section} \makeatother
\newtheorem{proposition}{Proposition}
\newtheorem{theorem}{Theorem}
\newtheorem{lemma}{Lemma}
\newtheorem{remark}{Remark}
\def\dfrac{\displaystyle\frac}
\def\sumd{\displaystyle\sum}
\def\intd{\displaystyle\int}
\def\prodd{\displaystyle\prod}
\def\Tr{\hbox{tr}\,}
\def\mdet{\mathrm{det}}
\begin{document}

\title{On the correlation functions of the characteristic polynomials of the hermitian sample covariance ensemble}
\author{ T. Shcherbina\\
 Institute for Low Temperature Physics, Kharkov,
Ukraine. \\E-mail: t\underline{ }shcherbina@rambler.ru
}
\date{}
\date{}
\maketitle
\begin{abstract}
We consider asymptotic behavior of the correlation functions of the characteristic polynomials of the
hermitian sample covariance matrices $H_n=n^{-1}A_{m,n}^*A_{m,n}$, where $A_{m,n}$ is a $m\times n$ complex matrix with
independent and identically distributed entries $\Re a_{\alpha j}$
and $\Im a_{\alpha j}$. We show that for the correlation function of any even order the
asymptotic behavior in the bulk and at the edge of the spectrum coincides with those for the Gaussian
Unitary Ensemble up to a factor, depending only on the fourth moment of the common probability law of entries
$\Re a_{\alpha j}$, $\Im a_{\alpha j}$, i.e. the higher moments do not
contribute to the above limit.
\end{abstract}
\section{Introduction}
Characteristic polynomials of random matrices have been actively studied
in the last years. The interest was initially stimulated by the similarity between the
asymptotic behavior of the moments of characteristic polynomials of a random matrix from the Circular Unitary Ensemble
and the moments of the Riemann $\zeta$-function along its critical line (see \cite{K-Sn:00}).
But with the emerging connections to the quantum chaos, integrable systems, combinatorics, representation
theory and others, it has become apparent that the characteristic polynomials of random matrices are also
of independent interest. This motivates the studies of the moments of characteristic
polynomials for other random matrix ensembles (see e.g. \cite{HKC:01}, \cite{Me-Nor:01}, \cite{Br-Hi:01},
\cite{St-Fy:03_1}, \cite{BaD:03}, \cite{St-Fy:03}, \cite{Va:03}, \cite{BorSt:06}, \cite{Got-K:08}, \cite{TSh:11}).

In this paper we consider the hermitian sample covariance ensembles with symmetric entries distributions,
i.e. $n\times n$ random matrices of the form
\begin{equation}\label{H}
H_n=n^{-1}A_{m,n}^*A_{m,n},
\end{equation}
where $A_{m,n}$ is an $m\times n$ complex matrix with independent and identically distributed entries $\Re a_{\alpha j}$
and $\Im a_{\alpha j}$ such that
\begin{equation}\label{W}
\begin{array}{c}
\mathbf{E}\{a_{\alpha j}\}=\mathbf{E}\{(a_{\alpha j})^2\}=0,\quad \mathbf{E}\{|a_{\alpha j}|^2\}=1,\quad \alpha=1,..,m,\,j=1,..,n,\\
\mathbf{E}\{\Re^{2l+1}a_{\alpha j}\}=\mathbf{E}\{\Im^{2l+1} a_{\alpha j}\}=0, \quad l\in \mathbb{N}.
\end{array}
\end{equation}
We assume that $m$ belongs to a sequence $\{m_n\}_{n=1}^\infty$ such that
\begin{equation}\label{c}
c_{m,n}:=\dfrac{m_n}{n}\to c\ge 1,\quad n\to\infty.
\end{equation}
We below denote this limit as "$\lim\limits_{m,n\to\infty}\ldots$".

Let $\lambda_1^{(n)},\ldots,\lambda_n^{(n)}$ be the eigenvalues of
$H_n$. Define their Normalized Counting Measure
(NCM) as
\begin{equation}  \label{NCM}
N_n(\triangle)=\sharp\{\lambda_j^{(n)}\in
\triangle,j=1,..,n \}/n,\quad N_n(\mathbb{R})=1,
\end{equation}
where $\triangle$ is an arbitrary interval of the real axis.
The behavior of $N_n$, as $n\to\infty$, is studied well enough.
In particular, it was shown
in \cite{Mar-Pa:67} that $N_{n}$ converges weakly in probability to a non-random measure
$N$ which is called the limiting NCM of the ensemble. The measure $N$ is absolutely continuous
and its density $\rho$ is given by the well-known Marchenko-Pastur
law:
\begin{equation}\label{rho_mp}
\rho(\lambda)=\left\{
\begin{array}{ll}
\dfrac{1}{2\pi \lambda}\sqrt{(\lambda_+-\lambda)
(\lambda-\lambda_-)},& \lambda\in \sigma,\\
0,&\lambda\not\in \sigma,
\end{array}\right.
\end{equation}
where
\begin{equation}\label{lam_pm}
\lambda_\pm=(1\pm\sqrt{c})^2, \quad \sigma=((1-\sqrt{c})^2,(1+\sqrt{c})^2).
\end{equation}
The mixed moments (or the correlation functions) of characteristic polynomials are
\begin{equation}\label{F}
F_{2k}(\Lambda)=\displaystyle\int\limits_{\mathcal{H}_n^+}\prod\limits_{j=1}^{2k}\mdet(\lambda_j-H_n)P_n(d\,H_n),
\end{equation}
where $\mathcal{H}_n^+$ is the space of positive definite hermitian $n\times n$ matrices,
$P_n(d\,H_n)$ is a probability law of the
$n\times n$ random matrix $H_n$,
and $\Lambda=\{\lambda_j\}_{j=1}^{2k}$ are real or complex parameters
that may depend on $n$.

We are interested in the asymptotic behavior of (\ref{F}) for matrices (\ref{H}) as $m,n\to\infty$ and
for $j=1,..,2k$
\begin{equation}\label{lam}
\lambda_j=\left\{
\begin{array}{ll}
\lambda_0+\xi_j/n\rho(\lambda_0),& \lambda_0\in \sigma,\\
\lambda_0+\xi_j/(n\gamma_{\pm})^{2/3},& \lambda_0=\lambda_\pm,
\end{array}\right.
\end{equation}
where $\lambda_\pm$ and $\sigma$ are defined in (\ref{lam_pm}),
\begin{equation}\label{gam}
\gamma_{\pm}=\dfrac{c^{1/4}}{(1\pm\sqrt{c})^2},
\end{equation}
$\rho$ is defined in (\ref{rho_mp}), and
$\widehat{\xi}=\{\xi_j\}_{j=1}^{2k}$ are real parameters varying in $[-M,M]\subset \mathbb{R}$.


In the case of hermitian matrix models the asymptotic behavior of (\ref{F}) was obtained by using the method of
orthogonal polynomials (see \cite{Br-Hi:00,St-Fy:03}). Unfortunately, the method of orthogonal polynomials
can not be applied to the general case of the
hermitian sample covariance ensembles (\ref{H}) -- (\ref{W}). In the paper \cite{TSh:11} the method based on
the Grassmann integration was developed to study the asymptotic behavior of the correlation functions
of any any even number of the characteristic polynomials of the hermitian Wigner ensemble. Here we apply this method
to the hermitian sample covariance ensembles (\ref{H}) -- (\ref{W}).

In \cite{Kos:09} Kosters use the exponential generating function to study the second moment, i.e.
the case $k=1$ in (\ref{F}). It was shown that for $\lambda_0\in\sigma$
\begin{multline*}
\dfrac{1}{n\rho(\lambda_0)}F_2\left(\lambda_0+\xi_1/(n\rho(\lambda_0)),
\lambda_0+\xi_2/(n\rho(\lambda_0))\right)=2\pi\lambda_0^{n-m}c_{m,n}^{m+1/2}\\
\times
e^{-n-m}\exp\{n\lambda_0+\alpha(\lambda_0)(\xi_1+\xi_2)+2\kappa_4\}\dfrac{\sin(\pi(\xi_1-\xi_2))}{\pi(\xi_1-\xi_2)}(1+o(1)),
\end{multline*}
where
\begin{equation}\label{alpha,kap}
\alpha(\lambda_0)=\left\{\begin{array}{ll}
\dfrac{\lambda_0-c+1}{2\lambda_0\rho(\lambda_0)},&\lambda_0\in\sigma,\\
(1\pm\sqrt{c})^{-1}\gamma_\pm^{-2/3},&\lambda_0=\lambda_\pm,
\end{array}\right.
\quad \kappa_4=\mu_4-3/4,
\end{equation}
$\gamma_\pm$ is defined in (\ref{gam}), and $\mu_4$ is the fourth moment of the probability law of $\Im W_{jk}$, $\Re W_{jk}$.
In \cite{Kos:09} for the case $c>1$, $m=cn+o(n^{1/3})$, $k=1$ the asymptotic behavior at the edge of the spectrum
(i.e. for $\lambda_0=\lambda_\pm$) was also obtained:
\begin{multline*}
\dfrac{1}{(n\gamma_\pm)^{2/3}}F_2\left(\lambda_0+\xi_1/(n\gamma_\pm)^{2/3},
\lambda_0+\xi_2/(n\gamma_\pm)^{2/3}\right)=2\pi (1\pm\sqrt{c})^{2(n-m)}c^{m+1/2}\\
\times
e^{2n\sqrt{c}}\exp\{n^{1/3}\alpha(\lambda_\pm)(\xi_1+\xi_2)+2\kappa_4\}A(\xi_1,\xi_2)(1+o(1))
\end{multline*}
with
\begin{equation}  \label{A}
A(x,y)=\displaystyle\frac{\mathrm{Ai}'(x)\mathrm{Ai}(y)-\mathrm{Ai}(x)\mathrm{Ai}'(y)}{x-y},
\end{equation}
where $\mathrm{Ai}(x)$ is the Airy function
\begin{equation}\label{Ai}
\mathrm{Ai}(x)=\dfrac{1}{2\pi}\int\limits_Se^{is^3/3+isx}d\,s,
\end{equation}
$$S=\{z\in \mathbb{C}|\arg z=\pi/6\,\,\hbox{or}\,\arg z=5\pi/6\}.$$
In this paper we consider the general case $k\ge 1$ of (\ref{F}) for the random matrices (\ref{H}).
Define
\begin{equation}\label{D_xi}
D^{(n)}(\xi,\lambda_0)=\left\{
\begin{array}{ll}
(n\rho(\lambda_0))^{-1}F_{2}\Big(\lambda_0+\xi/(n\rho(\lambda_0)),
\lambda_0+\xi/(n\rho(\lambda_0))\Big),& \lambda_0\in\sigma,\\
(n\gamma_\pm)^{-2/3}F_{2}\Big(\lambda_0+\xi/(n\gamma_\pm)^{2/3},
\lambda_0+\xi/(n\gamma_\pm)^{2/3}\Big),& \lambda_0=\lambda_\pm,
\end{array}\right.
\end{equation}
and denote
\begin{equation}\label{D_2k}
D_{2k}(\lambda_0)=\prod\limits_{l=1}^{2k}\sqrt{D^{(n)}(\xi_l,\lambda_0)}.
\end{equation}

  The main results of the paper are the following two theorems:
\begin{theorem}\label{thm:1}
Let the entries $\Im a_{\alpha j}$, $\Re a_{\alpha j}$ of the matrices (\ref{H}) have a
symmetric probability distribution with finite first $4k$ moments. Then we have for $k\ge 1$
\begin{multline}\label{lim1}
\lim\limits_{n\to\infty}\dfrac{1}{(n\rho(\lambda_0))^{k^2}D_{2k}(\lambda_0)}
F_{2k}\left(\Lambda_0+\widehat{\xi}/(n\rho(\lambda_0))\right)\\
=\dfrac{c^{k(k-1)/2}\exp\{k(k-1)\kappa_4(c-\lambda_0+1)^2c^{-1}\}}{\Delta(\xi_1,...,\xi_m)
\Delta(\xi_{k+1},...,\xi_{2k})}\mdet
\left\{\dfrac{\sin(\pi(\xi_i-\xi_{k+j}))}{\pi(\xi_i-\xi_{k+j})}
\right\}_{i,j=1}^k,
\end{multline}
where $F_{2k}$ and $\rho(\lambda)$ are defined in (\ref{F}) and (\ref{rho_mp}),
$\Lambda_0=(\lambda_0,\ldots,\lambda_0)\in \mathbb{R}^{2k}$,
$\lambda_0\in\sigma$, $\widehat{\xi}=\{\xi_j\}_{j=1}^{2k}$, and
$\kappa_4$ and $\sigma$ are defined in (\ref{alpha,kap}) and (\ref{lam_pm}).
\end{theorem}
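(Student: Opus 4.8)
The plan is to follow the Grassmann-integration scheme of \cite{TSh:11}, adapting it to the product structure $A_{m,n}^*A_{m,n}$. First I would represent each factor $\mdet(\lambda_j-H_n)$ as a Gaussian Berezin integral over an $n$-component Grassmann vector, so that, up to the standard sign and normalization, $F_{2k}(\Lambda)=\int \mathbf{E}\Big\{\exp\Big(-\sum_{l=1}^{2k}\bar\psi^{(l)}(\lambda_l-n^{-1}A^*A)\psi^{(l)}\Big)\Big\}\prod_{l}d\bar\psi^{(l)}d\psi^{(l)}$. Since $\bar\psi^{(l)}A^*A\psi^{(l)}=\sum_\alpha b_\alpha^{(l)}c_\alpha^{(l)}$ with $c_\alpha^{(l)}=\sum_k a_{\alpha k}\psi_k^{(l)}$ and $b_\alpha^{(l)}=\sum_j\bar a_{\alpha j}\bar\psi_j^{(l)}$, the entries enter the exponent only bilinearly and the expectation factorizes over the $m$ rows $\alpha$. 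Expanding the single-row exponential and using that the laws of $\Re a$, $\Im a$ are symmetric (so all odd moments vanish), the averaging produces an effective action depending on the Grassmann bilinears $\tau_{ll'}=n^{-1}\sum_{j}\psi_j^{(l)}\bar\psi_j^{(l')}$; the second moment $\mathbf{E}|a|^2=1$ gives the leading quadratic term, while the fourth moment $\mu_4$ contributes at the next order and, after multiplication by $m\sim cn$, survives in the limit to produce the factor governed by $\kappa_4=\mu_4-3/4$.

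Next I would decouple the resulting quartic Grassmann interaction by a Hubbard--Stratonovich transformation, introducing a $2k\times 2k$ complex matrix $Q$ dual to $\tau$. Integrating out the Grassmann variables turns $F_{2k}$ into a contour integral $F_{2k}=C_{m,n}\int dQ\,\exp\{n\,\Phi(Q)+\Psi(Q;\widehat\xi)+o(1)\}$, where $\Phi$ collects the $n$-extensive part (essentially $m\,\log\mdet(\cdots)$ from the $m$ rows together with the $\lambda_0$-linear term) and $\Psi$ carries the dependence on the rescaled parameters $\xi_j/(n\rho(\lambda_0))$ and the $\kappa_4$-correction. The stationary equation for $\Phi$ has two relevant roots $q_\pm$ fixed by the two branches of the Stieltjes transform of the Marchenko--Pastur law \eqref{rho_mp}; since all $\lambda_j$ collapse to a single $\lambda_0\in\sigma$, the dominant contribution comes from configurations in which $k$ eigenvalues of $Q$ sit at $q_+$ and $k$ at $q_-$, so the saddle is not isolated but fills the compact manifold $U(2k)/(U(k)\times U(k))$.

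I would then carry out steepest descent. The transversal (massive) Gaussian directions, together with the Jacobian of the change of variables $Q\mapsto (q_\pm,\text{coset})$, contribute only to the overall normalization; this is precisely what is cancelled on dividing by $(n\rho(\lambda_0))^{k^2}D_{2k}(\lambda_0)$, since $D_{2k}$ is assembled from the diagonal $k=1$ values \eqref{D_xi}--\eqref{D_2k} so that every $\lambda_0$-dependent prefactor, and the per-pair factor $e^{2\kappa_4}$, drops out. The residual integral over the coset $U(2k)/(U(k)\times U(k))$, with $\Psi$ linearized in the small differences $\xi_i-\xi_{k+j}$, is a Harish-Chandra/Itzykson--Zuber-type integral whose evaluation produces the ratio of $\mdet\{\sin(\pi(\xi_i-\xi_{k+j}))/(\pi(\xi_i-\xi_{k+j}))\}_{i,j=1}^k$ to the Vandermonde factors $\Delta(\xi_1,\dots,\xi_k)\Delta(\xi_{k+1},\dots,\xi_{2k})$. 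Finally, evaluating the fourth-moment term along the saddle manifold yields the remaining constant $c^{k(k-1)/2}\exp\{k(k-1)\kappa_4(c-\lambda_0+1)^2c^{-1}\}$, which establishes \eqref{lim1}.

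The hard part will be the rigorous steepest-descent estimate, in two respects. First, one must show that the higher moments (beyond the fourth) genuinely drop out: this requires bounding the tail of the expansion of the single-row average uniformly in the Grassmann variables and verifying that every such contribution is $o(1)$ after the $m$-fold product and the $Q$-integration, which is the content of the universality claim. Second, the integration over the non-isolated saddle manifold must be controlled so that the localization at scale $1/(n\rho(\lambda_0))$ correctly generates the sine-kernel determinant; one has to check that the non-relevant saddles (the signature splittings other than $k$--$k$) and the boundary of the integration domain give exponentially small contributions, and that the transversal Gaussian fluctuations do not couple to the slow modes along the coset. Handling these two estimates simultaneously, while tracking the precise constants that produce the $\kappa_4$-factor, is the principal technical obstacle.
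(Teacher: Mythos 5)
Your overall architecture---Grassmann representation of the determinants, factorization of the expectation over the $m$ rows, emergence of $\kappa_4$ from the fourth moment, reduction to a $2k\times 2k$ matrix integral, steepest descent with a degenerate saddle structure, and an Itzykson--Zuber-type evaluation producing the sine kernel---matches the paper's strategy in spirit. But the central reduction step is not the one the paper uses, and as you state it it does not go through. After averaging, the single-row factor is $\mdet Q_{2k}^{(n)}+\frac{2\kappa_4}{n^2}(\ldots)+n^{-2}\Phi(\Psi)$ as in (\ref{usr_a_k}), so the product over rows produces $\mdet^{m}\big(1-n^{-1}\Psi_{2k}^{(n)}\big)$ plus corrections; the effective action is the $m$-th power of a determinant of Grassmann bilinears, not a quartic form, and a Gaussian Hubbard--Stratonovich decoupling of a ``quartic interaction'' does not apply (this is the essential difference from the Wigner case of \cite{TSh:11}). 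The paper instead invokes the superbosonization identity of Lemma \ref{l:repr_det}, which converts $\mdet^{l}A$ exactly into a contour integral over a unitary $2k\times2k$ matrix, and then applies the Harish-Chandra/Itzykson--Zuber formula (\ref{Its-Zub}) \emph{early}, arriving at the explicit $2k$-fold eigenvalue contour integral (\ref{F_int}) with the ratio $\Delta(V)/\Delta(\widehat{\xi})$ already in place. Consequently there is no coset saddle manifold $U(2k)/(U(k)\times U(k))$ to integrate over at the end: each scalar variable $v_j$ localizes at one of the two points $v_\pm$ of (\ref{v_pm}), the sum over the $\binom{2k}{k}$ balanced assignments in (\ref{int_okr1}) plays the role of your zero modes, and the determinant $\mdet\{\sin(\pi(\xi_i-\xi_{k+j}))/\pi(\xi_i-\xi_{k+j})\}$ is reassembled algebraically via the Cauchy-type identity (\ref{iden}) rather than by evaluating an HCIZ integral on the saddle manifold. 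Your route could in principle be repaired by replacing the Gaussian decoupling with a delta-function (superbosonization) insertion, at which point it essentially becomes the paper's argument.

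The second gap is the one you yourself flag: showing that moments beyond the fourth do not contribute. This is not a routine tail estimate, because the remainder $\Phi(\Psi)$ multiplies a highly oscillatory Grassmann ``measure'' and must be controlled \emph{after} the reduction to the contour integral. The paper does this with Lemma \ref{l:dop_slag}, an a priori bound on normalized Grassmann expectations $\langle\Phi_r(n^{-1}\Psi_{2k}^{(r)},n^{-1}\sigma^{(r)})\rangle$ with respect to the weight (\ref{mu_til_2}), proved via a generating-function argument and a lower bound on the normalizing integral $J$ obtained from the same saddle-point analysis. Without an estimate of this type, the step from the full averaged action to the truncated one (keeping only $\mdet Q_{2k}^{(n)}$ and the $\kappa_4$ term) is unjustified, so your proposal is missing a key lemma rather than merely a computation.
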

\begin{theorem}\label{thm:2}
Let the entries $\Im a_{\alpha j}$, $\Re a_{\alpha j}$ of the matrices (\ref{H}) have a
symmetric probability distribution with finite first $4k$ moments, $\lambda=\lambda_\pm$ and
let $m$ belong to a sequence $\{m_n\}_{n=1}^\infty$ such that
\begin{equation}\label{cond_m}
m_n=c\,n+n^{1/3}\varepsilon_n,\quad c>1,
\end{equation}
where $\varepsilon_n\to 0, \,n\to\infty$.
Then we have for $k\ge 1$
\begin{multline*}
\lim\limits_{n\to\infty}\dfrac{1}{(n\gamma_\pm)^{2k^2/3}D_{2k}(\lambda_\pm)}
F_{2k}\left(\Lambda_0+\widehat{\xi}/(n\gamma_\pm)^{2/3}\right)\\
=\dfrac{c^{k(k-1)/2}\exp\{4k(k-1)\kappa_4\}}{\Delta(\xi_1,...,\xi_m)
\Delta(\xi_{k+1},...,\xi_{2k})}\mdet
\Big\{\mathrm{A}(\xi_j,\xi_{k+l})
\Big\}_{i,j=1}^k,
\end{multline*}
where $F_{2k}$ and $\gamma_\pm$ are defined in (\ref{F}) and (\ref{gam}),
$\Lambda_0=(\lambda_\pm,\ldots,\lambda_\pm)\in \mathbb{R}^{2k}$,
$\widehat{\xi}=\{\xi_j\}_{j=1}^{2k}$, and $\kappa_4$ and $\lambda_\pm$ are defined in (\ref{alpha,kap}) and (\ref{lam_pm}).
\end{theorem}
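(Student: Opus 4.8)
The plan is to adapt the Grassmann-integration scheme of \cite{TSh:11} to the rectangular structure of $A_{m,n}$, reducing the asymptotics to a saddle-point analysis of a $2k\times 2k$ matrix integral in which, at the edge, the two relevant saddles coalesce and produce the Airy kernel. First I would represent each determinant as a Gaussian Berezin integral over anticommuting variables. Introducing $n\times 2k$ Grassmann variables $\{\psi_j^{(p)},\bar\psi_j^{(p)}\}$ and using $(H_n)_{jk}=n^{-1}\sum_\alpha \bar a_{\alpha j}a_{\alpha k}$, I would write
$$\prod_{p=1}^{2k}\mdet(\lambda_p-H_n)=\int \exp\Big\{-\sum_{p=1}^{2k}\lambda_p\sum_j\bar\psi_j^{(p)}\psi_j^{(p)}+n^{-1}\sum_\alpha\sum_p\Big(\sum_j\bar a_{\alpha j}\bar\psi_j^{(p)}\Big)\Big(\sum_k a_{\alpha k}\psi_k^{(p)}\Big)\Big\}\,d\Psi,$$
so that the dependence on $H_n$ enters only through Grassmann bilinears.

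Second, since the entries are independent across $\alpha$ and $j$, the average over $A_{m,n}$ factorizes over $\alpha$, and each factor depends on the Grassmann variables only through the $2k\times 2k$ overlap matrix $Q$ with $Q_{pq}=n^{-1}\sum_j\bar\psi_j^{(p)}\psi_j^{(q)}$. Expanding $\log\mathbf{E}\{\cdots\}$ in cumulants and summing over the $m\approx cn$ values of $\alpha$, the second moment $\mathbf{E}\{|a_{\alpha j}|^2\}=1$ produces the leading action, while the fourth cumulant $\kappa_4=\mu_4-3/4$ produces an $O(1)$ correction (all odd moments drop out by the symmetry assumption (\ref{W})); higher cumulants are negligible as $m,n\to\infty$. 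This is the step that ultimately yields the factor $\exp\{4k(k-1)\kappa_4\}$, consistent with the bulk constant of Theorem \ref{thm:1} evaluated at $\lambda_0=\lambda_\pm$, where $(c-\lambda_0+1)^2c^{-1}=4$.

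Third, I would linearize the resulting quartic Grassmann expression by a Hubbard--Stratonovich transformation, trading $Q$ for a dual $2k\times 2k$ Hermitian matrix, and then integrate out the Grassmann variables exactly, using Sylvester's identity to collapse the $n$-dimensional determinant to a $2k$-dimensional one. This represents $F_{2k}$ as a matrix integral of the form $\int \exp\{n\,f(Q)\}\,g(Q)\,dQ$, whose stationary points reproduce the Marchenko--Pastur self-consistent equation for $\rho$ in (\ref{rho_mp}).

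Finally, the asymptotics follow from steepest descent. In the bulk $f$ has two non-degenerate complex-conjugate saddles whose oscillatory interference gives the sine kernel of Theorem \ref{thm:1}; at the edge $\lambda_0=\lambda_\pm$ these saddles merge, the Hessian degenerates, and one must retain the cubic term of $f$, which is precisely the $e^{is^3/3}$ of the Airy function (\ref{Ai}). The condition $m_n=c\,n+n^{1/3}\varepsilon_n$ aligns the spectral edge with the coalescence at the correct scale, fixing the exponent $2/3$ in $\xi/(n\gamma_\pm)^{2/3}$. The degeneracy also leaves a $U(k)\times U(k)$-symmetric saddle manifold; integrating over it and over the transverse Gaussian fluctuations, dividing by the normalization $D_{2k}(\lambda_\pm)$ that absorbs the $k=1$ prefactors $(1\pm\sqrt c)^{2(n-m)}c^{m+1/2}e^{2n\sqrt c}$, and collecting the Vandermonde factors $\Delta$ from the distinct $\xi_p$, yields the $k\times k$ determinant of Airy kernels $A(\xi_i,\xi_{k+j})$ together with the constant $c^{k(k-1)/2}\exp\{4k(k-1)\kappa_4\}$. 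I expect the main obstacle to be exactly this coalescing-saddle analysis: controlling the matrix integral uniformly near the degenerate saddle manifold, matching the cubic scaling to the Airy kernel, and verifying that the subleading terms, including the $n^{1/3}\varepsilon_n$ shift, do not contribute to the limit.
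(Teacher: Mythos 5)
Your steps 1, 2 and 4 match the paper's architecture (Grassmann representation, averaging with a cumulant count that isolates $\kappa_4$, reduction to a $2k\times2k$ integral, cubic degeneracy at the edge giving Airy), and your consistency check $(c-\lambda_0+1)^2c^{-1}=4$ at $\lambda_0=\lambda_\pm$ is correct. The genuine gap is in step 3. A Gaussian Hubbard--Stratonovich linearization of a ``quartic Grassmann expression'' is the Wigner-ensemble mechanism (where the averaged action is $\exp\{\mathrm{tr}\,\Psi^2/2n\}$ plus corrections); it does not apply here. For $H_n=n^{-1}A^*A$ the average over the entries factorizes over $\alpha$ and each factor equals $\det Q^{(n)}_{2k}+O(\kappa_4/n)$ with $Q^{(n)}_{2k}=1-n^{-1}\Psi^{(n)}_{2k}$, so the effective weight is $\det^{m}(1-n^{-1}\Psi^{(n)}_{2k})$ up to corrections (see (\ref{usr_a_k})). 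Its logarithm, $m\,\mathrm{tr}\log(1-n^{-1}\Psi)$, contains \emph{all} powers $\mathrm{tr}\,\Psi^j/n^{j-1}$, every one of which is of order $n$, so there is no quartic truncation to decouple and no Gaussian dual field to introduce. The paper's replacement for this step is the superbosonization identity of Lemma \ref{l:repr_det}, $\det^{l}A=K_{p,l}\int e^{\mathrm{tr}\,AU}\det^{-p-l}U\,d\mu(U)$ over a compact contour/unitary-group measure, applied twice, followed by the Harish-Chandra/Itzykson--Zuber formula (Proposition \ref{p:Its-Z}) to reduce everything exactly to a $2k$-fold contour integral over the eigenvalues $v_1,\dots,v_{2k}$ (Proposition \ref{p:int_repr}). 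Without this (or an equivalent bosonic decoupling of the $A^*A$ structure performed before averaging), your plan does not produce the claimed matrix integral. A secondary, unproved point is the dismissal of higher cumulants and of the $\kappa_4$ cross-terms: the paper needs the a priori bound of Lemma \ref{l:dop_slag} to control these, and that bound itself rests on the saddle-point estimates.

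Your description of the edge saddle geometry is also off in one respect. At $\lambda_0=\lambda_\pm$ there is no residual $U(k)\times U(k)$ saddle manifold: the two bulk saddles $v_\pm$ coalesce into the single degenerate point $v_0=(1+\sqrt c)^{-1}$, all $2k$ eigenvalue variables localize there on the scale $v_j=v_0+O(n^{-1/3})$, and the angular degrees of freedom have already been integrated out exactly by HCIZ. The determinant of Airy kernels then comes not from integrating over a coset manifold but from an algebraic manipulation of the Vandermonde ratio $\Delta(V)/\Delta(\widehat{\xi})$ after rotating the contour onto the rays $\arg\varphi=2\pi/3,\,4\pi/3$ (the set $S$ in (\ref{Ai})) and using the Cauchy-type identity (\ref{iden}); the hypothesis $m_n=cn+n^{1/3}\varepsilon_n$ is used to make $n(c_{m,n}-c)\log(1-v)$ subleading on that scale, exactly as you anticipated.
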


The theorems show that the above limits for the mixed moments of the characteristic polynomials
for random matrices (\ref{H}) coincide with those for the Gaussian Unitary Ensemble up to a
factor depending only on the fourth moment of the common probability law of the entries
$a_{\alpha j}$, i.e., that the higher moments of the law do not
contribute to the above limit. This is a manifestation of the universality,
that can be compared with the universality of the local bulk regime for Wigner matrices (see \cite{Er:10} and
references therein).

The paper is organized as follows. In Section $2$ we obtain a convenient asymptotic integral representation for
$F_{2k}$,
using the integration
over the Grassmann variables and the Harish Chandra/Itzykson-Zuber formula for integrals
over the unitary group. The method is similar to that of \cite{TSh:11}. In Section $3$ and $4$ we prove Theorem
\ref{thm:1} and \ref{thm:2}, applying the steepest descent method
to the integral representation.

We denote by $C, C_1$, etc. various $n$-independent quantities below, which
can be different in different formulas.

\section{The integral representation.}
In this section we obtain the integral representation for the mixed moments $F_{2k}$ (\ref{F})
of the characteristic polynomials, i.e. we prove the following proposition
\begin{proposition}\label{p:int_repr}
Let $\Lambda_{2k}=\Lambda_0+\widehat{\xi}/(an)^{\alpha}$, where
$\Lambda_0=\mathrm{diag}\{\lambda_0,..,\lambda_0\}$, $\widehat{\xi}=\mathrm{diag}\{\xi_1,..,\xi_{2k}\}$,
and
\begin{eqnarray}\label{a}
a&=&\left\{
\begin{array}{ll}
\rho(\lambda_0),&\lambda_0\in \sigma,\\
\gamma_\pm,& \lambda_0=\lambda_\pm,
\end{array}\right.\\ \label{bet}
\beta&=&\left\{
\begin{array}{ll}
1,&\lambda_0\in \sigma,\\
2/3,& \lambda_0=\lambda_\pm,
\end{array}\right.
\end{eqnarray}
where $\sigma$ and $\lambda_\pm$ are defined in (\ref{lam_pm}), and let $F_{2k}(\Lambda_{2k})$ of (\ref{F})
be the correlation function of the characteristic polynomials. Then we have for every $k$
\begin{equation}\label{F_int}
\begin{array}{c}
D_{2k}^{-1}(\lambda_0)F_{2k}(\Lambda_{2k})=\dfrac{n^{2k^2}(n^{\beta-1}a^{\beta})^{k(2k-1)}}{2^k\pi^ke^{2kn}D_{2k}(\lambda_0)}
\displaystyle\oint\limits_{\omega}\prod\limits_{j=1}^{2k}d\,v_j
e^{\sum\limits_{j=1}^{2k}(n\lambda_0v_j+n^{1-\beta}a^{-\beta}\xi_jv_j)}\dfrac{\Delta(V)}{\Delta(\widehat{\xi})}\\
\prod\limits_{l=1}^{2k}\dfrac{(1-v_l)^{m}}{v_l^{n+2k}}\exp\Big\{2c_{m,n}\kappa_4S_2((I-V)\Lambda_0)
\prod\limits_{l=1}^{2k}\frac{v_l}{1-v_l}\Big\}(1+O(n^{-\beta})),\,\, m,n\to\infty,
\end{array}
\end{equation}
where $V=\mathrm{diag}\{v_1,\ldots,v_{2k}\}$,
\begin{equation}\label{S}
S_2(A)=\dfrac{1}{2}\dfrac{d^2}{dx^2}\mdet(x - A)\bigg|_{x=c},
\end{equation}
$D_{2k}(\lambda_0)$ is defined in (\ref{D_2k}) and $\omega$ is any closed contour encircling $0$.
\end{proposition}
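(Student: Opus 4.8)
The plan is to adapt the Grassmann-integration scheme of \cite{TSh:11} to the multiplicative structure $H_n=n^{-1}A_{m,n}^*A_{m,n}$. First I would represent each factor as a Gaussian Grassmann integral over $n$ anticommuting variables, one independent family $\{\psi_{j,a},\bar\psi_{j,a}\}_{a=1}^{n}$ per index $j=1,\dots,2k$, so that
$$\prod_{j=1}^{2k}\det(\lambda_j-H_n)=\int\exp\Big\{\sum_{j,a}\lambda_j\bar\psi_{j,a}\psi_{j,a}-\frac1n\sum_{\alpha,j,a,b}\bar a_{\alpha a}a_{\alpha b}\bar\psi_{j,a}\psi_{j,b}\Big\}\prod_{j,a}d\bar\psi_{j,a}\,d\psi_{j,a}.$$
Because the matrix entries $a_{\alpha b}$ enter only through a quadratic form whose coefficients are the nilpotent bilinears $\mathcal G_{ab}=\sum_j\bar\psi_{j,a}\psi_{j,b}$, the expectation $\mathbf{E}\{\,\cdot\,\}$ factorizes over the row index $\alpha$ and may be evaluated by a cumulant expansion. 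The conditions (\ref{W}) ensure that the second moment reproduces the Gaussian contribution $\det_{(n)}(1+n^{-1}\mathcal G)^{-m}$, while the leading departure from Gaussianity is controlled by the single fourth cumulant $\kappa_4=\mu_4-3/4$; all odd moments vanish, and I would show that the contributions of moments beyond the fourth carry extra negative powers of $n$.

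The next step is to reduce the problem from the $n\times n$ matrix $\mathcal G$ to the $2k\times2k$ matrix of bilinears $\tau_{ij}=\sum_a\bar\psi_{i,a}\psi_{j,a}$. Since $\mathcal G$ has rank at most $2k$, Sylvester's identity gives $\det_{(n)}(1+n^{-1}\mathcal G)=\det_{(2k)}(1+n^{-1}\tau)$, and the source term becomes $\sum_{j,a}\lambda_j\bar\psi_{j,a}\psi_{j,a}=\operatorname{tr}\Lambda_{2k}\tau$, so the integrand depends on the Grassmann variables only through $\tau$. I would then convert the remaining Grassmann integral into an ordinary integral over a bosonic $2k\times2k$ matrix $Q$ by the superbosonization/Hubbard--Stratonovich identity, diagonalize $Q=UVU^{*}$ with $V=\operatorname{diag}(v_1,\dots,v_{2k})$, and integrate over the diagonalizing group by the Harish-Chandra/Itzykson--Zuber formula. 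This produces the ratio $\Delta(V)/\Delta(\Lambda_{2k})$ and, after inserting $\lambda_j=\lambda_0+\xi_j/(an)^\beta$ and using $\Delta(\Lambda_{2k})\propto(an)^{-\beta k(2k-1)}\Delta(\widehat\xi)$, the ratio $\Delta(V)/\Delta(\widehat\xi)$ together with the contour integral over $\omega$ and the linear exponent $\sum_j(n\lambda_0v_j+n^{1-\beta}a^{-\beta}\xi_jv_j)$; the superbosonization measure and the Gaussian Grassmann integration supply the factors $\prod_l(1-v_l)^{m}v_l^{-(n+2k)}$.

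It then remains to track the non-Gaussian correction through these transformations. The fourth-cumulant term produced in the first step is proportional to $\kappa_4\sum_a\mathcal G_{aa}^2$ summed over the $m$ rows; after the reduction to $\tau$, the superbosonization, and the unitary integration it collapses, on the relevant saddle, to the single scalar $2c_{m,n}\kappa_4 S_2((I-V)\Lambda_0)\prod_l v_l/(1-v_l)$, with $S_2$ as in (\ref{S}) and $c_{m,n}=m/n$ arising from $m/n^2$ times the $n$-fold index sum. Collecting the normalization of the Grassmann measure, the HCIZ constant, the superbosonization Jacobian, and the rescalings then yields the prefactor $n^{2k^2}(n^{\beta-1}a^\beta)^{k(2k-1)}/(2^k\pi^k e^{2kn})$ together with $D_{2k}(\lambda_0)$, completing (\ref{F_int}).

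I expect the principal obstacle to be the rigorous treatment of the non-Gaussian contribution. One must verify simultaneously that the fourth cumulant survives the limit in exactly the form $2c_{m,n}\kappa_4 S_2((I-V)\Lambda_0)\prod_l v_l/(1-v_l)$ and that every higher moment, after the Sylvester reduction and the unitary integration, contributes only $O(n^{-\beta})$ uniformly on $\omega$. Disentangling which index contractions in the cumulant expansion are leading versus subleading, in the presence of the Grassmann signs, is the delicate combinatorial point; once this is settled, the Sylvester/HCIZ reduction and the bookkeeping of constants are routine.
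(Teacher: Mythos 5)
Your overall strategy coincides with the paper's: Grassmann representation of the product of determinants, row-by-row expectation isolating the fourth cumulant $\kappa_4$, reduction to $2k\times 2k$ bilinears, superbosonization (the paper's Lemma \ref{l:repr_det}) and the Harish-Chandra/Itzykson--Zuber formula. One concrete error first: the reduction identity $\det_{(n)}(1+n^{-1}\mathcal{G})=\det_{(2k)}(1+n^{-1}\tau)$ is false for Grassmann bilinears. Because $\psi\bar\psi=-\bar\psi\psi$, the correct statement is $\det_{(n)}(1+n^{-1}\mathcal{G})=\det_{(2k)}(1-n^{-1}\tau)^{-1}$ (check it already for one replica and two sites: $\det_{(2)}(1+\epsilon\mathcal{G})=1+\epsilon\tau+\epsilon^2\tau^2=(1-\epsilon\tau)^{-1}$). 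Taken literally, your identity would put $\det(1+n^{-1}\tau)^{-m}$ into the reduced integral --- a negative power of a determinant of Grassmann bilinears, to which the simple bosonization lemma does not apply and which produces the wrong $v$-dependence; the correct reduction yields $\det^{m}(1-n^{-1}\Psi_{2k})$, whence the factors $(1-v_l)^{m}$.

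More seriously, the step you yourself flag as the ``principal obstacle'' is the technical core of the proof and cannot be dispatched by power counting alone. The corrections beyond the fourth cumulant carry an extra $n^{-1}$ per row but are raised to the power $m\sim n$, and expanding the $\kappa_4$-term to the $m$-th power additionally produces terms with coinciding site indices that must be shown to be subleading. What is required is a uniform a priori bound on the Grassmann ``expectations'' of arbitrary polynomials in the bilinears $n^{-1}\Psi_{2k}^{(r)}$ and $n^{-1}\sigma^{(r)}$ with respect to the reference density $e^{\operatorname{tr}\Psi\Lambda_{2k}}\det^{m-l}Q_{2k}^{(r)}$ --- this is precisely the paper's Lemma \ref{l:dop_slag}, whose proof goes through a generating function, contour estimates on its derivatives, and a lower bound on the normalizing integral $J$ that itself leans on the saddle-point analysis of Sections 3 and 4. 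Without an argument of this type, the assertion that all higher moments contribute only $O(n^{-\beta})$ uniformly on $\omega$ remains unproved. A smaller point: the identification of the $\kappa_4$-contribution with $2c_{m,n}\kappa_4 S_2((I-V)\Lambda_0)\prod_l v_l/(1-v_l)$ is not a statement ``on the relevant saddle''; in the paper it is an exact algebraic consequence of the Cauchy--Binet identity $\sum_{l_1<l_2,\,s_1<s_2}\det A^{(l_1,l_2;s_1,s_2)}B_{(l_1,l_2;s_1,s_2)}=\tfrac12\tfrac{d^2}{dx^2}\det(x-AB)\big|_{x=0}$ combined with the resummation over the number $q$ of fourth-cumulant insertions, and it must hold on the whole contour $\omega$ for (\ref{F_int}) to be meaningful.
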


To this end we use the integration
over the Grassmann variables. The integration was introduced by Berezin
and widely used in the physics literature (see e.g. \cite{Ber} and \cite{Ef}).
For the reader convenience we give a brief outline of the techniques here.

\subsection{Grassmann integration}
Let us consider two sets of formal variables
$\{\psi_j\}_{j=1}^n,\{\overline{\psi}_j\}_{j=1}^n$, which satisfy the anticommutation
conditions
\begin{equation}\label{anticom}
\psi_j\psi_k+\psi_k\psi_j=\overline{\psi}_j\psi_k+\psi_k\overline{\psi}_j=\overline{\psi}_j\overline{\psi}_k+
\overline{\psi}_k\overline{\psi}_j=0,\quad j,k=1,..,n.
\end{equation}
These two sets of variables $\{\psi_j\}_{j=1}^n$ and
$\{\overline{\psi}_j\}_{j=1}^n$ generate the Grassmann algebra
$\mathfrak{A}$. Taking into account that $\psi_j^2=0$, we have that all elements
of $\mathfrak{A}$ are polynomials of $\{\psi_j\}_{j=1}^n$ and
$\{\overline{\psi}_j\}_{j=1}^n$. We can also define functions of the Grassmann variables. Let $\chi$ be
an element of $\mathfrak{A}$. For any analytical function $f$ we mean by
$f(\chi)$ the element of $\mathfrak{A}$ obtained by
substituting $\chi$ in the Taylor series of $f$. Since
$\chi$ is a polynomial of $\{\psi_j\}_{j=1}^n$, $\{\overline{\psi}_j\}_{j=1}^n$,
there exists such $l$ that $\chi^l=0$ and hence the series
terminates after a finite number of terms and so $f(\chi)\in
\mathfrak{A}$.

Note also that if $\chi$ is the sum of the products of even numbers of the Grassmann variables, then,
according to the definition
of the functions of the Grassmann variables, expanding $(z-\chi)^{-1}$ into the series we obtain for any analytic function $f$
\begin{equation}\label{cont_int_gr}
\displaystyle\oint\limits_{\Omega}\dfrac{f(z)}{z-\chi}\dfrac{dz}{2\pi i}=f(\chi),
\end{equation}
where $\Omega$ is any closed contour encircling $0$.

Following Berezin \cite{Ber}, we define the operation of
integration with respect to the anticommuting variables in a formal
way:
\begin{equation}\label{int_gr}
\intd d\,\psi_j=\intd d\,\overline{\psi}_j=0,\quad \intd
\psi_jd\,\psi_j=\intd \overline{\psi}_jd\,\overline{\psi}_j=1.
\end{equation}
This definition can be extended on the general element of $\mathfrak{A}$ by
the linearity. A multiple integral is defined to be a repeated
integral. The "differentials" $d\,\psi_j$ and
$d\,\overline{\psi}_k$ anticommute with each other and with the
variables $\psi_j$ and $\overline{\psi}_k$.

Thus, if
$$
f(\chi_1,\ldots,\chi_k)=a_0+\sum\limits_{j_1=1}^k
a_{j_1}\chi_{j_1}+\sum\limits_{j_1<j_2}a_{j_1j_2}\chi_{j_1}\chi_{j_2}+
\ldots+a_{1,2,\ldots,k}\chi_1\ldots\chi_k,
$$
then
\[
\intd f(\chi_1,\ldots,\chi_k)d\,\chi_k\ldots d\,\chi_1=a_{1,2,\ldots,k}.
\]

   Let $A$ be an ordinary hermitian matrix. The following Gaussian integral is well-known
\begin{equation}\label{G_C}
\intd \exp\Big\{-\sum\limits_{j,k=1}^nA_{j,k}z_j\overline{z}_k\Big\}
\prod\limits_{j=1}^n\dfrac{d\,\Re z_jd\,\Im z_j}{\pi}=\dfrac{1}{\mdet A}.
\end{equation}
One of the important formulas of the Grassmann variables theory is
the analog of this formula for the Grassmann algebra (see \cite{Ber}):
\begin{equation}\label{G_Gr}
\int \exp\Big\{\sum\limits_{j,k=1}^nA_{j,k}\overline{\psi}_j\psi_k\Big\}
\prod\limits_{j=1}^nd\,\overline{\psi}_jd\,\psi_j=\mdet A.
\end{equation}
Besides, we have
\begin{equation}\label{G_Gr_1}
\int \prod\limits_{p=1}^q\overline{\psi}_{l_p}\psi_{s_p}\exp\Big\{\sum\limits_{j,k=1}^nA_{j,k}\overline{\psi}_j
\psi_k\Big\}\prod\limits_{j=1}^nd\,\overline{\psi}_jd\,\psi_j=\mdet A_{l_1,..,l_q;s_1,..,s_q},
\end{equation}
where $A_{l_1,..,l_q;s_1,..,s_q}$ is a $(n-q)\times (n-q)$ minor of the matrix $A$ without lines $l_1,..,l_q$ and
columns $s_1,..,s_q$.

\subsection{Asymptotic integral representation for $F_2$}
In this subsection we obtain (\ref{F_int}) for $k=1$ by
using the Grassmann integrals.
This formula was obtained in \cite{Kos:09} by using another method. We give here a detailed proof to
show the basic ingredients of our techniques that will be elaborated in the next subsection
to obtain the asymptotic integral representation of (\ref{F}) for $k>1$.

Using (\ref{G_Gr}), we obtain from (\ref{F})
\begin{equation}\label{ne_usr_2}
\begin{array}{c}
F_2(\Lambda_2)={\bf
E}\bigg\{\displaystyle\int e^{\sum\limits_{r=1}^2\sum\limits_{p,q=1}^n(\lambda_l-H)_{p,q}
\overline{\psi}_{pr}\psi_{qr}}d\,\Psi_{2,n}\bigg\}\\
={\bf E}\bigg\{\displaystyle\int \prod\limits_{\alpha=1}^m
e^{-\frac{1}{n}\sum\limits_{r=1}^2\big(\sum\limits_{p=1}^n\overline{a}_{\alpha p}\overline{\psi}_{pr}\big)
\big(\sum\limits_{q=1}^na_{\alpha q}\psi_{qr}\big)}
 e^{\sum\limits_{s=1}^2\lambda_s\sum\limits_{p=1}^n
\overline{\psi}_{ps}\psi_{ps}}d\,\Psi_{2,n}\bigg\}\\
={\bf E}\bigg\{\displaystyle\int \prod\limits_{\alpha=1}^m\prod\limits_{r=1}^2
\bigg(1-\frac{1}{n}\sum\limits_{p,q=1}^n\overline{a}_{\alpha p}a_{\alpha q}\overline{\psi}_{pr}
\psi_{qr}\bigg)
 e^{\sum\limits_{s=1}^2\lambda_s\sum\limits_{p=1}^n
\overline{\psi}_{ps}}d\,\Psi_{2,n}\bigg\},
\end{array}
\end{equation}
since for any $\alpha=1,..,m$ and any $r=1,2$ we have according to (\ref{anticom})
\begin{equation}\label{kvadr}
\bigg(\sum\limits_{p=1}^n\overline{a}_{\alpha p}\overline{\psi}_{pr}\bigg)^2=
\bigg(\sum\limits_{q=1}^na_{\alpha q}\psi_{qr}\bigg)^2=0.
\end{equation}
Here $\{\psi_{jl}\}_{j,l=1}^{n \, 2}$ are the
Grassmann variables ($n$ variables for each determinant in (\ref{F})) and
\begin{equation}\label{dPsi}
d\,\Psi_{s,l}=\prod\limits_{r=1}^{s}\prod\limits_{j=1}^l
d\,\overline{\psi}_{jr}d\,\psi_{jr}.
\end{equation}
In view of (\ref{W}) and (\ref{anticom}) we get
\begin{eqnarray}\label{usr_a_2}
{\bf E}\bigg\{\prod\limits_{r=1}^2
\bigg(1-\frac{1}{n}\sum\limits_{p,q=1}^n\overline{a}_{\alpha p}a_{\alpha q}\overline{\psi}_{pr}
\psi_{qr}\bigg)\bigg\}
=1-\frac{1}{n}\sum\limits_{r=1}^2\sum\limits_{p,q=1}^n{\bf E}\left\{\overline{a}_{\alpha p}a_{\alpha q}\right\}\overline{\psi}_{pr}
\psi_{qr}\\ \notag
+\frac{1}{n^2}\sum\limits_{p_1,q_1=1}^n\sum\limits_{p_2,q_2=1}^n{\bf E}\left\{\overline{a}_{\alpha p_1}a_{\alpha q_1}
\overline{a}_{\alpha p_2}a_{\alpha q_2}\right\}\overline{\psi}_{p_1 1}\psi_{q_1 1}\overline{\psi}_{p_2 2}\psi_{q_2
2}\\ \notag
=1-\frac{1}{n}\sum\limits_{r=1}^2\sum\limits_{p=1}^n\overline{\psi}_{pr}
\psi_{pr}+\frac{1}{n^2}\sum\limits_{p\ne q}\overline{\psi}_{p 1}\psi_{p 1}\overline{\psi}_{q 2}\psi_{q
2}-\frac{1}{n^2}\sum\limits_{p\ne q}\overline{\psi}_{p 1}\psi_{p 2}\overline{\psi}_{q 2}\psi_{q
1}\\ \notag
+\frac{2\mu_4+1/2}{n^2}\sum\limits_{p=1}^n\overline{\psi}_{p1}\psi_{p1}\overline{\psi}_{p2}\psi_{p2}=
\mdet\, Q_2^{(n)}+\frac{2\kappa_4}{n^2}\sum\limits_{p=1}^n\overline{\psi}_{p1}
\psi_{p1}\overline{\psi}_{p2}\psi_{p2},
\end{eqnarray}
where $\Psi_{s}^{(l)}$ and $Q_s^{(l)}$ are the matrix with Grassmann entries
\begin{equation}\label{Psi_2}
\Psi_{s}^{(l)}=\bigg\{\sum\limits_{p=1}^l\overline{\psi}_{pr}
\psi_{pt}\bigg\}_{r,t=1}^s,\quad Q_s^{(l)}=1-n^{-1}\Psi_{s}^{(l)},
\end{equation}
$\mu_4$ is the $4$-th moment of the probability law of $\Im a_{\alpha j}$, $\Re a_{\alpha j}$ of (\ref{W}),
and $\kappa_4$ is defined in (\ref{alpha,kap}).

Thus, (\ref{ne_usr_2}) and (\ref{usr_a_2}) yield
\begin{equation}\label{usr_2}
\begin{array}{c}
F_2(\Lambda_2)
=\displaystyle
\int e^{\Tr\Psi_{2}^{(n)}\Lambda_2}\bigg(\mdet\, Q_2^{(n)}+\frac{2\kappa_4}{n^2}\sum\limits_{p=1}^n\overline{\psi}_{p1}
\psi_{p1}\overline{\psi}_{p2}\psi_{p2}\bigg)^m
d\,\Psi_{2,n}\\
=\sumd\limits_{q=1}^{m}\binom{m}{q}\dfrac{(2\kappa_4)^q}{n^{2q}}\displaystyle
\int e^{\Tr\Psi_{2}^{(n)}\Lambda_2}\mdet^{m-q} Q_2^{(n)}\bigg(\sum\limits_{p=1}^n\overline{\psi}_{p1}
\psi_{p1}\overline{\psi}_{p2}\psi_{p2}\bigg)^qd\,\Psi_{2,n}\\
=\sumd\limits_{q=1}^{m}\binom{m}{q}\dfrac{n!}{(n-q)!}\dfrac{(2\kappa_4)^q}{n^{2q}}\displaystyle
\int e^{\Tr\Psi_{2}^{(n-q)}\Lambda_2}\,\mdet^{m-q} Q_2^{(n-q)}d\,\Psi_{2,n-q}\\
=:\sumd\limits_{q=1}^{m}\binom{m}{q}\dfrac{n!}{(n-q)!}\dfrac{(2\kappa_4)^q}{n^{2q}}I_{2,q}.
\end{array}
\end{equation}
Here we used the symmetry of $\overline{\psi}_{lp}, \psi_{lp}$ and
\[
\intd \overline{\psi}_{p1}\psi_{p1}\overline{\psi}_{p2}\psi_{p2}\,f(\overline{\psi}_{p1},
\psi_{p1},\overline{\psi}_{p2},\psi_{p2})d\,\overline{\psi}_{p1}d\,\psi_{p1}d\,\overline{\psi}_{p2}d\,\psi_{p2}=f(0,0,0,0).
\]
To compute $I_{2,q}$ we use the following lemma
\begin{lemma}\label{l:repr_det}
Let $A$ be any $p\times p$ matrix and let $l$ be a positive integer. Then we have
\begin{equation}\label{repr_det}
\mdet^l
A=K_{p,l}\int\dfrac{e^{\Tr\,AU}}{\mdet^{p+l}U}
d\,\mu(U),
\end{equation}
where
\begin{equation}\label{K_p,l}
K_{p,l}=(-1)^{p(p-1)/2}S_p^{-1}\prod\limits_{s=0}^{p-1}(l+s)!, \quad S_p=\prod\limits_{s=1}^ps!,
\end{equation}
$U$ is a unitary matrix with eigenvalues $\{u_j\}_{j=1}^p$, $W$ is a matrix which diagonalizes $U$ and
\begin{equation}\label{dU}
d\,\mu(U)=\triangle^2(u_1,\ldots,u_p)d\,W\prod\limits_{j=1}^p\dfrac{du_j}{2\pi i},
\end{equation}
where $d\,u_j$ means the integration over the circle $\omega=\{z:|z|=1\}$, $d\,W$ is the Haar measure over
the unitary group $U(p)$, and $\triangle(u_1,\ldots,u_p)$ is the Vandermonde determinant of $u_j$-s.
\end{lemma}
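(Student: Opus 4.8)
The plan is to reduce the right-hand side of (\ref{repr_det}) to the Harish-Chandra/Itzykson-Zuber (HCIZ) formula for integrals over the unitary group, and then to evaluate the remaining contour integrals over the eigenvalues $u_j$ by residues. First I would write $U=WDW^*$ with $D=\mathrm{diag}\{u_1,\ldots,u_p\}$, so that $\Tr AU=\Tr(AWDW^*)$ and $\mdet U=\prod_j u_j$. Using the factorization of the measure (\ref{dU}), the $W$-integration separates off. Assuming first that $A$ is diagonalizable with eigenvalues $a_1,\ldots,a_p$, the HCIZ formula gives
$$\int_{U(p)}e^{\Tr(AWDW^*)}\,dW=c_p\,\frac{\mdet\{e^{a_iu_j}\}_{i,j=1}^p}{\Delta(a)\,\Delta(u)},\qquad c_p=\prod_{s=1}^{p-1}s!,$$
so that, after cancelling one factor $\Delta(u)$ against $\Delta^2(u)$ in (\ref{dU}), the right-hand side of (\ref{repr_det}) becomes
$$\frac{K_{p,l}\,c_p}{\Delta(a)}\oint\cdots\oint\frac{\Delta(u)\,\mdet\{e^{a_iu_j}\}_{i,j=1}^p}{\big(\prod_{j}u_j\big)^{p+l}}\prod_{j=1}^p\frac{du_j}{2\pi i}.$$

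Next I would expand both $\Delta(u)=\mdet\{u_j^{\,i-1}\}$ and $\mdet\{e^{a_iu_j}\}$ as sums over permutations $\sigma,\tau$ and integrate termwise. Each one-dimensional integral is elementary,
$$\oint\frac{e^{au}}{u^{\,m+1}}\,\frac{du}{2\pi i}=\frac{a^{m}}{m!}\qquad(m\ge 0),$$
and for the relevant exponents $m=p+l-\sigma(j)\ge l\ge 1$ none of the integrals vanishes. Summing over $\tau$ reproduces a determinant, while the factor $\prod_j(p+l-\sigma(j))!=\prod_{s=0}^{p-1}(l+s)!$ is independent of $\sigma$; hence the sum over $\sigma$ collapses (the two factors $\mathrm{sgn}(\sigma)$ cancel) and yields $p!$ copies of the generalized Vandermonde determinant $\mdet\{a_i^{\,p+l-k}\}_{i,k=1}^p$. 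Factoring $a_i^{\,l}$ from each row turns this into $(-1)^{p(p-1)/2}(\mdet A)^l\Delta(a)$, the surviving $\Delta(a)$ cancelling the one in the denominator and leaving a clean power of $\mdet A$.

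Collecting the constants $c_p=S_p/p!$ and $\prod_{s=0}^{p-1}(l+s)!$ together with the two sign factors then reproduces exactly $K_{p,l}$ of (\ref{K_p,l}), establishing (\ref{repr_det}) for diagonalizable $A$. The one point that needs care is the passage to an arbitrary $p\times p$ matrix $A$: the HCIZ step requires $A$ to be unitarily diagonalizable, so I would first prove the identity for Hermitian $A$, where the diagonalizing unitary can be absorbed into $W$ by invariance of the Haar measure, and then extend it to all complex $A$ by analyticity. Both sides of (\ref{repr_det}) are entire functions of the entries of $A$ — the left-hand side is a polynomial, and on the right the integrand is entire while the domain of integration is compact — and they agree on the totally real subspace of Hermitian matrices, hence everywhere (this also removes the apparent $\Delta(a)$-singularity for coincident eigenvalues, since the final expression contains no $\Delta(a)$). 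The main obstacle is purely bookkeeping: keeping the three sources of constants — the HCIZ prefactor $c_p$, the factorials from the residues, and the sign from reversing the Vandermonde columns — mutually consistent, which is precisely what the value of $K_{p,l}$ in (\ref{K_p,l}) encodes.
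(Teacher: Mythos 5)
Your proposal is correct and follows essentially the same route as the paper: diagonalize $U$, apply the Harish-Chandra/Itzykson-Zuber formula to the unitary integral, evaluate the resulting eigenvalue contour integrals by residues to produce a generalized Vandermonde determinant equal to $(-1)^{p(p-1)/2}\mdet^{l}A\,\triangle(a)$, and then extend from a real-analytic family of matrices to arbitrary $A$ by analyticity. The only (immaterial) differences are bookkeeping ones — you expand both determinants over permutations where the paper keeps a single determinant of one-dimensional integrals, and you extend from Hermitian rather than from normal matrices.
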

\begin{remark}
1. Lemma \ref{l:repr_det} is a particular case of the superbosonization formula which was proved in the
physics paper \cite{SupB:08}. We give below (see Subsection 2.4) a different proof for this simple case.

2. Since both sides of (\ref{repr_det}) are analytic functions of $a_{i,j}$, we can take $A$ with not necessary complex but
also with even Grassmann elements.

3. Combining (\ref{repr_det}) and (\ref{G_Gr}) we get that for any $p\times p$ matrix $A$
\begin{equation}\label{ch_exp}
\int e^{\mathrm{tr}\,A\Psi^{(l)}}d\,\Psi_{p,l}=K_{p,l}\int \dfrac{e^{\mathrm{tr}\,AU}}{\det^{p+l}U}d\,\mu(U),
\end{equation}
where $\Psi^{(l)}=\{\sum_{s=1}^l\psi_{sj}\psi_{sr}\}_{j,r=1}^{p}$ and $d\,\Psi_{p,l}$ is defined in
(\ref{dPsi}).
\end{remark}
Using Lemma \ref{l:repr_det} and Remark 1.3, we obtain
\begin{align}\notag
I_{2,q}&=K_{2,m-q}\displaystyle
\int \dfrac{e^{\Tr\,\Lambda_2\Psi_{2}^{(n-q)}+\Tr W_2-n^{-1}\Tr\,W_2\Psi_{2}^{(n-q)} }}{\mdet^{m-q+2}W_2}
d\,\Psi_{2,n-q}d\,\mu(W_2)\\ \label{I_2,q_1}
&=K_{2,m-q}K_{2,n-q}\displaystyle
\int\dfrac{e^{\Tr W_2+\Tr\,\Lambda_2U_2-n^{-1}\Tr
U_{2}W_2}}{\mdet^{m-q+2}W_2\mdet^{n-q+2}U_2}d\,\mu(U_2)d\,\mu(W_2)
\\ \notag &=K_{2,n-q}\displaystyle
\int\dfrac{e^{\Tr\,\Lambda_2U_2}\mdet^{m-q}(I-n^{-1}U_{2})}{\mdet^{n-q+2}U_2}d\,\mu(U_2),
\end{align}
where $U_2$ and $W_2$ are unitary $2\times 2$ matrices, and $d\,\mu(U_2)$, $d\,\mu(W_2)$ are defined in (\ref{dU}).

Recall that we are interested in $\Lambda_2=\Lambda_{0,2}+\widehat{\xi}_2/(na)^\beta$, where
$\Lambda_{0,2}=\hbox{diag}\{\lambda_0,\lambda_0\}$,
$\widehat{\xi}_2=\hbox{diag}\{\xi_1,\xi_{2}\}$, and $a$, $\beta$ are defined in (\ref{a}), (\ref{bet}).
Substituting (\ref{dU}) in (\ref{I_2,q_1}) and using that functions $\mdet (I-n^{-1}U_{2})$,
$\Tr\,\Lambda_0U_2$,
and $\mdet\,U_2$
are unitary invariant, we obtain from (\ref{I_2,q_1})
\begin{eqnarray}\label{I_2,q_2}
I_{2,q}=K_{2,n-q}\displaystyle
\int\displaystyle
\oint\limits_\omega e^{\Tr\,\Lambda_{0,2}V_2+(na)^{-\beta}\Tr\,\widehat{\xi}_2W^*V_2W}\prod\limits_{r=1}^2\dfrac{
(1-\frac{v_r}{n})^{m-q}}{v_r^{n-q+2}}(v_1-v_2)^2d\,\mu(W)\dfrac{dv_1dv_2}{(2\pi i)^2}\\ \notag
=\frac{K_{2,n-q}}{n^{2(n-q)}}\displaystyle
\int\displaystyle
\oint\limits_\omega e^{\Tr\,\Lambda_{0,2}V_2+(na)^{-\beta}\Tr\,\widehat{\xi}_2W^*V_2W}\prod\limits_{r=1}^2\dfrac{
(1-v_r)^{m-q}}{v_r^{n-q+2}}(v_1-v_2)^2d\,\mu(W)\dfrac{dv_1dv_2}{(2\pi i)^2},
\end{eqnarray}
where $\omega$ is any closed contour encircling $0$.
The integral over the unitary group $U(2)$ can be computed using the
Harish Chandra/Itsykson-Zuber formula (see e.g. \cite{Me:91}, Appendix~5):
\begin{proposition}\label{p:Its-Z}
Let $A$ be the normal $p\times p$ matrix with distinct eigenvalues $\{a_i\}_{i=1}^p$ and
$B=\mathrm{diag}\{b_1,\ldots,b_p\}$. Then for any symmetric function $f(B)$ of $\{b_j\}_{j=1}^p$  we have
\begin{equation}\label{Its-Zub}
\int\limits_{U(p)}\int e^{\mathrm{tr}\, AU^*BU} \triangle^2(B)f(B)d\,Ud\,B\\
=S_p
\int e^{\sum\limits_{j=1}^pa_jb_j}\dfrac{\triangle(B)}{\triangle(A)} f(b_1,\ldots,b_p)
d\,B,
\end{equation}
where $S_p$ is defined in (\ref{K_p,l}), $d\,B=\prod\limits_{j=1}^pd\,b_j$,
$d\,U$ is the Haar measure over the unitary group $U(n)$ and $\triangle(A)$, $\triangle(B)$ are the
Vandermonde determinants of the eigenvalues $\{a_i\}_{i=1}^p$, $\{b_i\}_{i=1}^p$ of $A$ and $B$.
\end{proposition}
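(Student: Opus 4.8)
My plan is to reduce the statement to the classical Harish-Chandra/Itzykson-Zuber evaluation of the inner integral over $U(p)$ and then to symmetrize in the diagonal entries of $B$. Since $A$ is normal, write $A=V^*\mathrm{diag}\{a_1,\dots,a_p\}V$ with $V\in U(p)$; absorbing $V$ into the Haar variable $U$ (which leaves $d\,U$ invariant) lets me assume $A=\mathrm{diag}\{a_1,\dots,a_p\}$, so that $\mathrm{tr}\,AU^*BU$ depends only on the eigenvalues $\{a_i\}$ and $\{b_j\}$. The one analytic ingredient I would invoke is the Harish-Chandra/Itzykson-Zuber identity in the form cited in \cite{Me:91}, namely
\begin{equation*}
\int\limits_{U(p)}e^{\mathrm{tr}\,AU^*BU}\,d\,U=\Big(\prodd_{k=1}^{p-1}k!\Big)\dfrac{\mdet\{e^{a_ib_j}\}_{i,j=1}^p}{\Delta(A)\Delta(B)},
\end{equation*}
which is legitimate because the eigenvalues $\{a_i\}$ are distinct (and on the full-measure set where the $b_j$ are distinct, so that the Vandermonde denominators are nonzero).

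Substituting this into the left-hand side of (\ref{Its-Zub}) and using $\Delta^2(B)/\Delta(B)=\Delta(B)$, the double integral becomes
\begin{equation*}
\Big(\prodd_{k=1}^{p-1}k!\Big)\dfrac{1}{\Delta(A)}\int\mdet\{e^{a_ib_j}\}_{i,j=1}^p\,\Delta(B)\,f(B)\,d\,B.
\end{equation*}
Next I would expand the determinant as $\mdet\{e^{a_ib_j}\}=\sumd_{\sigma\in S_p}\mathrm{sgn}(\sigma)\,e^{\sum_i a_i b_{\sigma(i)}}$ and, in the $\sigma$-th summand, relabel the integration variables by $b_i\mapsto b_{\sigma^{-1}(i)}$. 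Under this relabelling $f(B)$ is unchanged (it is symmetric), while the Vandermonde is antisymmetric, $\Delta(B)\mapsto\mathrm{sgn}(\sigma)\Delta(B)$; the factor $\mathrm{sgn}(\sigma)$ then cancels the sign of the permutation, so that each of the $p!$ summands equals $\int e^{\sum_i a_i b_i}\Delta(B)f(B)\,d\,B$. Hence the determinant collapses to $p!$ copies of its diagonal term, and the left-hand side of (\ref{Its-Zub}) equals
\begin{equation*}
\Big(\prodd_{k=1}^{p-1}k!\Big)\,p!\,\dfrac{1}{\Delta(A)}\int e^{\sum_{j=1}^p a_j b_j}\,\Delta(B)\,f(B)\,d\,B.
\end{equation*}
It remains to match the constant: $\big(\prodd_{k=1}^{p-1}k!\big)\,p!=\prodd_{k=1}^{p}k!=S_p$ as defined in (\ref{K_p,l}), which reproduces exactly the right-hand side of (\ref{Its-Zub}).

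The only genuine difficulty is the Harish-Chandra/Itzykson-Zuber identity itself; everything else is a purely algebraic symmetrization plus factorial bookkeeping. If a self-contained argument were required, the main obstacle would be establishing that identity, e.g. by showing that both sides satisfy the same heat equation $\partial_t u=\tfrac12\sum_i\partial_{a_i}^2 u$ (the radial part of the Laplacian on Hermitian matrices) together with matching short-time asymptotics, or via the Duistermaat-Heckman localization of the $U(p)$-integral. I would also emphasize that the symmetrization step is algebraic and hence insensitive to the precise meaning of $\int\cdots d\,B$: the identity persists verbatim whenever the $b$-integrals converge, and in particular when $\int d\,B$ is interpreted as the contour integral $\oint_\omega\prodd_{j}d\,v_j/(2\pi i)$ over the circle that is actually used later (with $A=\Lambda_{0,2}$ and $b_j=v_j$) in the evaluation of $I_{2,q}$.
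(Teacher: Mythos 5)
Your argument is correct, and it is worth noting that the paper itself does not actually prove Proposition~\ref{p:Its-Z}: it is stated with a bare citation to \cite{Me:91}, Appendix~5, where the Harish-Chandra/Itzykson-Zuber formula appears in its determinantal form
$\int_{U(p)}e^{\mathrm{tr}\,AU^*BU}\,d\,U=\big(\prod_{k=1}^{p-1}k!\big)\mdet\{e^{a_ib_j}\}_{i,j=1}^p/(\triangle(A)\triangle(B))$.
What you supply is precisely the missing bridge from that determinantal form to the symmetrized form actually used in the paper: the reduction to diagonal $A$ by Haar invariance, the cancellation $\triangle^2(B)/\triangle(B)=\triangle(B)$, the expansion of $\mdet\{e^{a_ib_j}\}$ over $S_p$ with the relabelling $b_j\mapsto b_{\sigma^{-1}(j)}$ (under which $\mathrm{sgn}(\sigma)$ from the expansion cancels against the antisymmetry of $\triangle(B)$ while $f$ and $d\,B$ are invariant), and the bookkeeping $p!\cdot\prod_{k=1}^{p-1}k!=S_p$, all of which check out against (\ref{K_p,l}). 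Your closing remarks are also the right ones to make: the symmetrization is purely algebraic and therefore valid when $\int d\,B$ is the contour integral over $\omega$ used in (\ref{I_2,q_2}) and (\ref{I_2k,q_6}) (where the $b_j$ are the unimodular, hence complex, eigenvalues of a unitary matrix, so one is implicitly using the analytic continuation of HCIZ in the $b$-variables), and the one genuinely nontrivial ingredient is the HCIZ identity itself, which both you and the paper take as known. So your proof is complete at exactly the same level of rigor as the paper's citation, and is arguably more informative since it records the derivation of the stated symmetric-function version from the standard reference.
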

This and formula (\ref{I_2,q_2}) yields
\begin{equation}\label{I_2,q}
I_{2,q}=\frac{2 n^{\beta-1}a^{\beta} K_{2,n-q}}{n^{2(n-q)}}\displaystyle
\oint\limits_\omega e^{n\Tr\,\Lambda_{0,2}V_2+n^{1-\beta}a^{-\beta}\Tr\,\widehat{\xi}_2V_2}\prod\limits_{r=1}^2\dfrac{
(1-v_r)^{m-q}}{v_r^{n-q+2}}\dfrac{(v_1-v_2)dv_1dv_2}{(\xi_1-\xi_2)(2\pi i)^2}.
\end{equation}
Hence, since
\[
\dfrac{n!}{(n-q)!}\cdot\dfrac{(n-q+1)!(n-q)!}{n^{2n-q}}=2\pi n^2 e^{-2n} (1+O(1/n))
\]
we get (\ref{F_int}) for $k=1$ from (\ref{usr_2}), (\ref{K_p,l}), and (\ref{I_2,q}).

\subsection{Asymptotic integral representation for $F_{2k}$}

Using (\ref{G_Gr}) and (\ref{kvadr}), we obtain from (\ref{F}) (cf. (\ref{ne_usr_2}))
\begin{equation}\label{ne_usr_k}
\begin{array}{c}
F_{2k}(\Lambda_{2k})={\bf
E}\bigg\{\displaystyle\int e^{\sum\limits_{r=1}^{2k}\sum\limits_{p,q=1}^n(\lambda_l-H)_{p,q}
\overline{\psi}_{pr}\psi_{qr}}d\,\Psi_{2k,n}\bigg\}\\
={\bf E}\bigg\{\displaystyle\int e^{\sum\limits_{s=1}^{2k}\lambda_s\sum\limits_{p=1}^n
\overline{\psi}_{ps}\psi_{ps}}\prod\limits_{\alpha=1}^m
e^{-\frac{1}{n}\sum\limits_{r=1}^{2k}\big(\sum\limits_{p=1}^n\overline{a}_{\alpha p}\overline{\psi}_{pr}\big)
\big(\sum\limits_{q=1}^na_{\alpha q}\psi_{qr}\big)}
d\,\Psi_{2k,n}\bigg\}\\
={\bf E}\bigg\{\displaystyle\int e^{\sum\limits_{s=1}^{2k}\lambda_s\sum\limits_{p=1}^n
\overline{\psi}_{ps}\psi_{ps}}\prod\limits_{\alpha=1}^m\prod\limits_{r=1}^{2k}
\Big(1-\frac{1}{n}\sum\limits_{p,q=1}^n\overline{a}_{\alpha p}a_{\alpha q}\overline{\psi}_{pr}
\psi_{qr}\Big)
d\,\Psi_{2k,n}\bigg\}.
\end{array}
\end{equation}
In view of (\ref{W}) similarly to (\ref{usr_a_2}) we get
\begin{multline}\label{usr_a_k}
{\bf E}\bigg\{\prod\limits_{r=1}^{2k}
\Big(1-\frac{1}{n}\sum\limits_{p,q=1}^n\overline{a}_{\alpha p}a_{\alpha q}\overline{\psi}_{pr}
\psi_{qr}\Big)\bigg\}
=\mdet\,Q^{(n)}_{2k}\\
+\frac{2\kappa_4}{n^2}\sum\limits_{l_1<l_2,s_1<s_2}\mdet (Q^{(n)}_{2k})^{(l_1,l_2;s_1,s_2)}
\sum\limits_{p=1}^n\overline{\psi}_{pl_1}
\psi_{ps_1}\overline{\psi}_{pl_2}\psi_{ps_2} +
n^{-2}\Phi(\Psi),
\end{multline}
where $Q^{(n)}_{2k}$ is defined in (\ref{Psi_2}),
$\,\mdet (Q^{(n)}_{2k})^{(l_1,l_2;s_1,s_2)}$ is $(2k-2)\times(2k-2)$ minor of matrix
$Q^{(n)}_{2k}$ without lines $s_1,s_2$ and columns $l_1,l_2$,
$\kappa_4$ is defined in (\ref{alpha,kap}) and  $\Phi(\Psi)$ is a polynomial of the variables
$\{(n^{-1}\Psi^{(n)}_{2k})_{r,s}\}_{r,s=1}^{2k}$ and
\begin{equation}\label{sig}
n^{-1}\sigma_{\overline{l},\overline{s}}^{(n)}=\dfrac{1}{n}\sum\limits_{p=1}^n\prod\limits_{j=1}^q\overline{\psi}_{pl_j}\psi_{ps_j},
\quad \overline{l}=(l_1,\ldots,l_q),\quad \overline{s}=(s_1,\ldots,s_q).
\end{equation}
Now we use
\begin{lemma}\label{l:dop_slag}
Set $A=\{a_{i,j}\}_{i,j=1}^{2k}$, $b=\{b_{\overline{l},\overline{s}}\}$, where $\overline{l},\overline{s}$ is defined in
(\ref{sig}). Let $\Phi_r(A,b)$ be an analytic function
of the variables $\{a_{i,j}\}$ and $\{b_{\overline{l},\overline{s}}\}$ and let $(1-\varepsilon)n<r\le n$, $0\le l<\varepsilon n$
with some sufficiently small $\varepsilon>0$. Then there exist an absolute
constants $C_0, C_1$ such that
\begin{equation*}
\intd \Phi_r(n^{-1}\Psi^{(r)}_{2k},n^{-1}\sigma^{(r)})\widetilde{\mu}_{2k,l}^{(r)}(\Psi)d\,\Psi_{2k,r}\le
C_0\max\limits_{|a_{i,j}|, |b_{\overline{l},\overline{s}}|\le C_1}|\Phi_r(A,b)|\cdot \intd
\widetilde{\mu}_{2k,l}^{(r)}(\Psi)d\,\Psi_{2k,r},
\end{equation*}
where
\begin{equation}\label{mu_til_2}
\widetilde{\mu}_{2k,l}^{(r)}(\Psi)=e^{\Tr\Psi_{2k}^{(r)}\Lambda_{2k}}\,\mdet^{m-l} Q_{2k}^{(r)}.
\end{equation}
\end{lemma}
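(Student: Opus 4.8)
The plan is to trade the Grassmann integral for an ordinary integral over the unitary group, on which the arguments of $\Phi_r$ become \emph{bounded} quantities, and then to estimate by the supremum of $\Phi_r$. The first observation is that every entry $(n^{-1}\Psi^{(r)}_{2k})_{i,j}$ and every $n^{-1}\sigma^{(r)}_{\overline l,\overline s}$ is a sum of products of an even number of the anticommuting variables, hence a nilpotent \emph{even} element of $\mathfrak A$. Therefore formula (\ref{cont_int_gr}) applies to each of these arguments separately. Taking all contours to be circles of a common radius $C_1$ about the origin, I represent
\begin{equation*}
\Phi_r(n^{-1}\Psi^{(r)}_{2k},n^{-1}\sigma^{(r)})=\oint\cdots\oint\Phi_r(z,w)\prod_{i,j}\frac{1}{z_{i,j}-(n^{-1}\Psi^{(r)}_{2k})_{i,j}}\prod_{\overline l,\overline s}\frac{1}{w_{\overline l,\overline s}-n^{-1}\sigma^{(r)}_{\overline l,\overline s}}\prod\frac{dz_{i,j}}{2\pi i}\prod\frac{dw}{2\pi i}.
\end{equation*}
Substituting this into the left-hand side of the lemma, interchanging the (finite) Grassmann integration with the contour integrals, and bounding $|\Phi_r(z,w)|$ on the contours by $\max_{|a_{i,j}|,|b_{\overline l,\overline s}|\le C_1}|\Phi_r|$, the estimate reduces to the total-variation bound
\begin{equation*}
\oint\cdots\oint|G(z,w)|\,\frac{|dz|}{2\pi}\cdots\frac{|dw|}{2\pi}\le C_0\int\widetilde\mu^{(r)}_{2k,l}(\Psi)\,d\Psi_{2k,r},
\end{equation*}
where
\begin{equation*}
G(z,w):=\int\frac{\widetilde\mu^{(r)}_{2k,l}(\Psi)}{\prod_{i,j}(z_{i,j}-n^{-1}\Psi_{i,j})\prod_{\overline l,\overline s}(w_{\overline l,\overline s}-n^{-1}\sigma_{\overline l,\overline s})}\,d\Psi_{2k,r}.
\end{equation*}
Indeed, for $\Phi_r\equiv1$ the representation gives $\oint\cdots\oint G(z,w)\prod\frac{dz}{2\pi i}\prod\frac{dw}{2\pi i}=\int\widetilde\mu\,d\Psi$, so the right-hand side of the asserted inequality is exactly this reference integral, and the whole content of the lemma is the comparison of $\int|G|$ with it.

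To estimate $G$ I would compute it by superbosonization. Expanding each resolvent $1/(z_{i,j}-n^{-1}\Psi_{i,j})=\sum_{t\ge0}z_{i,j}^{-t-1}(n^{-1}\Psi_{i,j})^t$ (a terminating sum by nilpotency) and likewise in the $\sigma$-variables, $G$ becomes a finite combination, rational in $z,w$, of integrals $\int P(\Psi,\sigma)\,\widetilde\mu\,d\Psi$ with $P$ a monomial in the bilinears $\Psi_{i,j}$ and the $\sigma_{\overline l,\overline s}$. Since $\widetilde\mu^{(r)}_{2k,l}=e^{\Tr\Psi^{(r)}_{2k}\Lambda_{2k}}\det^{m-l}(I-n^{-1}\Psi^{(r)}_{2k})$ is a function of the $2k\times2k$ bilinear matrix alone, for the monomials built only from $\Psi_{i,j}$ Lemma \ref{l:repr_det} together with Remark 1.3 converts the integral into one over $U(2k)$ in which $n^{-1}\Psi$ is replaced by $n^{-1}U$ with $U$ unitary; its entries are then bounded by $n^{-1}$, and the normalising constants reproduce $\int\widetilde\mu\,d\Psi$ up to a fixed factor.

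The main obstacle is the $\sigma$-factors. Unlike the $\Psi_{i,j}$, the quantities $\sigma^{(r)}_{\overline l,\overline s}=\sum_p\prod_j\overline\psi_{pl_j}\psi_{ps_j}$ are \emph{not} functions of the invariant bilinear matrix, so superbosonization does not apply to them directly. Here I would use the same-index structure exactly as in the passage from (\ref{usr_2}): a factor $\sigma^{(r)}_{\overline l,\overline s}$ of length $q$ saturates the anticommuting variables carrying a common colour index $p$, and integrating it against $\widetilde\mu$ simultaneously lowers the effective colour count $r$ and the determinant exponent $m-l$, producing only a bounded combinatorial factor because $(1-\varepsilon)n<r\le n$ and $0\le l<\varepsilon n$ force the resulting ratios $\tfrac{m!}{(m-q)!}\,\tfrac{r!}{(r-q)!}\,n^{-2q}$ to stay $O(1)$. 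Iterating this reduction over all $\sigma$-factors in each monomial leaves a purely bilinear integral to which the previous step applies, with all arguments again bounded by an absolute constant. Choosing $C_1$ once and for all so large that the geometric series in $1/z,1/w$ converge with a fixed bound and performing the contour integrations then yields the claimed estimate with absolute constants $C_0,C_1$. The two points demanding the most care are precisely this reduction of the non-invariant $\sigma$-terms and the verification that, after superbosonization, the signed (indeed oscillatory) unitary measure has total variation comparable to its integral — it is the regime restrictions on $r$ and $l$ that keep $\det^{m-l}(I-n^{-1}U)/\det^{2k+r}U$ under control and make this comparison possible.
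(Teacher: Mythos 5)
Your opening moves coincide with the paper's: the contour representation (\ref{cont_int_gr}) applied in each even argument, and the reduction of the lemma to a comparison between the Grassmann integral of $\widetilde\mu^{(r)}_{2k,l}$ against products of resolvents and the reference quantity $J=\int\widetilde\mu^{(r)}_{2k,l}\,d\Psi_{2k,r}$. The ``total variation comparable to the integral'' issue you flag at the end is indeed needed, and the paper supplies it through the lower bound (\ref{ots_mer}) on $|J|$, imported from the saddle-point analysis of Sections 3--4. The genuine gap is in the middle step. After expanding the resolvents you must bound the joint moments $M(\{l_{i,j}\},\{t_{\overline l,\overline s}\})=\langle\prod(n^{-1}\Psi^{(r)}_{2k})_{i,j}^{l_{i,j}}\prod(n^{-1}\sigma^{(r)}_{\overline l,\overline s})^{t_{\overline l,\overline s}}\rangle$ by $C^{\sum l_{i,j}+\sum t_{\overline l,\overline s}}$ \emph{uniformly in exponents that may be as large as $r\sim n$}; otherwise the resolvent series does not converge on any fixed contour $|a_{i,j}|=C_1$ and no absolute constant $C_1$ exists. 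Your argument never produces such a bound: ``entries bounded after superbosonization'' and ``bounded combinatorial factor per $\sigma$-reduction'' control each monomial separately but leave open factorial growth in the exponents (a degree-$l$ monomial in one entry is an $l$-th derivative of the exponential object, and $l$-th derivatives generically cost $l!$). The paper kills exactly this: it forms the generating function $F(\zeta,z)=\langle\exp\{n^{-1}\Tr\zeta\Psi^{(r)}_{2k}+n^{-1}\sum z_{\overline l,\overline s}\sigma^{(r)}_{\overline l,\overline s}\}\rangle$, superbosonizes it into $\Phi_1^{r}(\Lambda_{2k}-V,n^{-1}\zeta,n^{-1}z)$ with $\Phi_1$ of degree at most one in each variable (this is also how the non-invariant $\sigma$'s are absorbed: they are diagonal in the colour index $p$, so they only modify the per-$p$ factor), uses $r\le n$ to turn $(1+C|\zeta_{i,j}|/n)^{r}$ into $e^{C|\zeta_{i,j}|}$, and then extracts the moments by Cauchy's inequality on circles of radius $l_{i,j}/C$ and $t_{\overline l,\overline s}/C$ --- which is precisely the optimization that converts $l!\,e^{C|t|-l\log|t|}$ into $\sqrt{2\pi l}\,C^{l}$.

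A secondary but real obstruction to your route: Lemma \ref{l:repr_det} as proved in the paper covers only integrals of $e^{\Tr A\Psi}$, so you cannot ``superbosonize a monomial in the $\Psi_{i,j}$'' directly from it; you would have to differentiate the exponential identity in $A$ (which is again the generating-function route) or invoke the full superbosonization theorem of the cited physics literature, which the paper deliberately avoids using in its general form. So while your skeleton matches the paper's, the step carrying all of the analytic weight --- the uniform geometric bound on arbitrarily high joint moments of the $\Psi$- and $\sigma$-variables --- is asserted rather than proved.
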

The proof of Lemma \ref{l:dop_slag} is given in Subsection 2.4.

Denote the expression multiplied by $\kappa_4$ in the r.h.s. of (\ref{usr_a_k}) by $n^{-1}X$. Write
\begin{multline}\label{bin}
\Big(\mdet\,Q^{(n)}_{2k}
+\frac{\kappa_4}{n}X+n^{-2}\Phi(\Psi)\Big)^m\\
=\sum\limits_{k_1+k_2\le m}\dfrac{m!}{k_1!k_2!(m-k_1-k_2)!}\Big(\mdet\,Q^{(n)}_{2k}\Big)^{m-k_1-k_2}
\Big(\frac{\kappa_4}{n}X\Big)^{k_1}\Big(n^{-2}\Phi(\Psi)\Big)^{k_2}.
\end{multline}
It is easy to see  that the terms in (\ref{bin}) such that $k_1+k_2\ge \varepsilon m$ give the contribution
of order $e^{-\varepsilon n\log n}$ and thus can be omitted. Hence, (\ref{ne_usr_k}), (\ref{usr_a_k}), and Lemma \ref{l:dop_slag} yield
\begin{equation}\label{usr_k_1}
\begin{array}{c}
F_{2k}(\Lambda_{2k})
=(1+O(n^{-1}))\displaystyle\int d\,\Psi_{2k,n}\,\,
 e^{\Tr\Psi_{2k}^{(n)}\Lambda_{2k}}\\
\bigg(\mdet\,Q^{(n)}_{2k}
+\dfrac{2\kappa_4}{n^2}\sum\limits_{l_1<l_2,s_1<s_2}\mdet (Q^{(n)}_{2k})^{(l_1,l_2;s_1,s_2)}
\sum\limits_{p=1}^n\overline{\psi}_{pl_1}
\psi_{ps_1}\overline{\psi}_{pl_2}\psi_{ps_2}\bigg)^m,
\end{array}
\end{equation}
where $Q^{(n)}_{2k}$ and $\Psi_{2k}^{(n)}$ are
defined in (\ref{Psi_2}).

  To compute the r.h.s. of (\ref{usr_k_1}) we use the Newton binomial formula and observe that the term with $p_s=p_l$ in
the product $\prod_{j=1}^q
\Big(n^{-1}\sum_{p_j}\overline{\psi}_{p_jl_{1,j}}\psi_{p_js_{1,j}}\overline{\psi}_{p_jl_{2,j}}\psi_{p_js_{2,j}}\Big)$
can be expressed in terms of (\ref{sig}) with an additional factor $n^{-1}$. Therefore, according to Lemma
\ref{l:dop_slag} it suffices to consider only the terms with $p_s\ne p_l$ or, taking into account the
symmetry, the term $p_1=n$, $p_2=n-1$, \ldots, $p_q=n-q+1$ with coefficient $n!/(n-q)!$.
Thus, we can write
\begin{equation}\label{F_int0}
\begin{array}{c}
F_{2k}(\Lambda_{2k})=(1+O(n^{-1}))
\displaystyle\sum\limits_{q=0}^m\binom{m}{q}\dfrac{n!}{(n-q)!}\dfrac{(2\kappa_4)^q}{n^q}I_{2k,q},
%
%
\end{array}
\end{equation}
where
\begin{equation}\label{I_2k,q}
I_{2k,q}=\displaystyle\int
\widetilde{\mu}_{2k,q}^{(n)}(\Psi)P_{q,n}^{(n)}(\Psi)d\,\Psi_{2k,n},
\end{equation}
\begin{equation}\label{P_q}
P_{q,l}^{(r)}(\Psi)=\prod\limits_{p=n-q+1}^l
\bigg(\sum\limits_{l_1^p<l_2^p,s_1^p<s_2^p}\mdet (Q^{(r)}_{2k})^{(l_1^p,l_2^p;s_1^p,s_2^p)}
\overline{\psi}_{pl_1^p}
\psi_{ps_1^p}\overline{\psi}_{pl_2^p}\psi_{ps_2^p}\bigg),
\end{equation}
where $\widetilde{\mu}_{2k,q}^{(n)}$ is defined in (\ref{mu_til_2}).
Note that
\begin{equation*}
\mdet (Q^{(n)}_{2k})^{(l_1,l_2;s_1,s_2)}=\mdet
(Q^{(n-q)}_{2k})^{(l_1,l_2;s_1,s_2)}+n^{-1}\widetilde{\Phi}_1(\Psi),
\end{equation*}
$\widetilde{\Phi}_1(\Psi)$ are polynomials of $\{\psi_{js}\}_{j,s=1}^{n\,2k}$, $\{\overline{\psi}_{js}\}_{j,s=1}^{n\,2k}$
with the sum of the coefficients of order $O(q)$, $m,n\to\infty$.
Hence, using Lemma \ref{l:dop_slag}, we get
\begin{equation}\label{I_2k,q_1}
I_{2k,q}=\displaystyle\int
\widetilde{\mu}_{2k,q}^{(n)}(\Psi)P_{q,n}^{(n-q)}(\Psi)d\,\Psi_{2k,n}(1+O(q/n))=:\widetilde{I}_{2k,q}(1+O(q/n)),
\end{equation}
According to Lemma \ref{l:repr_det} and (\ref{G_Gr_1}) we can rewrite $\widetilde{I}_{2k,q}$ as
\begin{align}\notag
\widetilde{I}_{2k,q}&=K_{2k,m-q}\int d\,\mu(V)
d\,\Psi_{2k,n-q}\dfrac{e^{\hbox{tr}\,\Lambda_{2k}\Psi_{2k}^{(n-q)}+\hbox{tr}\,Q_{2k}^{(n-q)}V}}
{\det^{m-q+2k}V}\\ \label{I_2k,q_2}
&\times \int \prod\limits_{p=n-q+1}^n\prod\limits_{l=1}^{2k}d\,\overline{\psi}_{pl}d\,\psi_{pl}
e^{\sum\limits_{i,j=1}^{2k}(\Lambda_{2k}-n^{-1}V)_{i,j}\sum\limits_{p=n-q+1}^n\overline{\psi}_{pi}\psi_{pj}}
P_{q,n}^{(n-q)}(\Psi)\\ \notag
&=K_{2k,m-q}\int d\,\mu(V)
d\,\Psi_{2k,n-q}\dfrac{e^{\hbox{tr}\,\Lambda_{2k}\Psi_{2k}^{(n-q)}+\hbox{tr}\,Q_{2k}^{(n-q)}V}}
{\det^{m-q+2k}V}\\ \notag
&\times\bigg(\sum\limits_{l_1<l_2,s_1<s_2}\mdet (Q^{(n-q)}_{2k})^{(l_1,l_2;s_1,s_2)}\mdet (\Lambda_{2k}-n^{-1}V)_{(l_1,l_2;s_1,s_2)}
\bigg)^{q}.
\end{align}
Besides, the Cauchy-Binet formula (see \cite{Gant:59}) yields for $2k\times 2k$ matrices $A,B$
\[
\sum\limits_{l_1<l_2,s_1<s_2}\mdet A^{(l_1,l_2;s_1,s_2)}B_{(l_1,l_2;s_1,s_2)}
=\dfrac{1}{2}\dfrac{d^2}{dx^2}\det(x-AB)\Big|_{x=0}.
\]
Thus, using again Lemma \ref{l:repr_det} and Remark 1.3, we obtain
\begin{align}\notag
I_{2k,q}&=
K_{2k,m-q}\int d\,\mu(V)
d\,\Psi_{2k,n-q}\dfrac{e^{\hbox{tr}\,\Lambda_{2k}\Psi_{2k}^{(n-q)}+\hbox{tr}\,Q_{2k}^{(n-q)}V}}
{\det^{m-q+2k}V}\\ \label{I_2k,q_3}
&\int\prod\limits_{s=1}^q\dfrac{d\,z_s}{2\pi i z_s^3}\int\prod\limits_{s=1}^q\dfrac{K_{2k,1}d\,\mu(W_s)}
{\det^{2k+1}W_s}e^{\hbox{tr}\,Q^{(n-q)}_{2k}(\Lambda_{2k}-n^{-1}V)\sum\limits_{p=1}^qW_p-
\sum\limits_{p=1}^qz_p\hbox{tr}\,W_p}\\ \notag
&=K_{2k,m-q}K_{2k,n-q}\int d\,\mu(V)
d\,\mu(U)\dfrac{e^{\hbox{tr}\,\Lambda_{2k}U+\hbox{tr}\,(1-n^{-1}U)V}}
{\det^{m-q+2k}V\det^{n-q+2k}U}\\ \notag
&\int\prod\limits_{s=1}^q\dfrac{d\,z_s}{2\pi i z_s^3}\int\prod\limits_{s=1}^q\dfrac{K_{2k,1}d\,\mu(W_s)}
{\det^{2k+1}W_s}e^{\hbox{tr}\,(1-n^{-1}U)(\Lambda_{2k}-n^{-1}V)\sum\limits_{p=1}^qW_p-
\sum\limits_{p=1}^qz_p\hbox{tr}\,W_p}\\ \notag
&=K_{2k,n-q}\int \dfrac{e^{\hbox{tr}\,\Lambda_{2k}U}\det^{m-q}(1-n^{-1}U)
\det^{m-q}(1-n^{-1}\sum\limits_{p=1}^qW_p)}
{\det^{n-q+2k}U}\\ \notag
&\times e^{\hbox{tr}\,(1-n^{-1}U)\Lambda_{2k}\sum\limits_{p=1}^qW_p-
\sum\limits_{p=1}^qz_p\hbox{tr}\,W_p}\prod\limits_{s=1}^q\dfrac{d\,z_s}{2\pi i z_s^3}\prod\limits_{s=1}^q
\dfrac{K_{2k,1}d\,\mu(W_s)}
{\det^{2k+1}W_s}d\,\mu(U).
\end{align}
Besides,
\[
\mdet^{m-q}(1-n^{-1}\sum\limits_{p=1}^qW_p)=e^{-c_{m,n}\sum\limits_{p=1}^q\hbox{tr}\,W_p}(1+O(q/n)).
\]
Substituting this to (\ref{I_2k,q_3}) and using (\ref{repr_det}), we get
\begin{align}\notag
\widetilde{I}_{2k,q}&=(1+O(q/n))
K_{2k,n-q}\int \dfrac{e^{\hbox{tr}\,\Lambda_{2k}U}\det^{m-q}(1-n^{-1}U)}
{\det^{n-q+2k}U}\\ \label{I_2k,q_4}
&\times e^{\hbox{tr}\,(1-n^{-1}U)\Lambda_{2k}\sum\limits_{p=1}^qW_p-
\sum\limits_{p=1}^q(z_p+c_{m,n})\hbox{tr}\,W_p}\prod\limits_{s=1}^q\dfrac{d\,z_s}{2\pi i z_s^3}\prod\limits_{s=1}^q
\dfrac{K_{2k,1}d\,\mu(W_s)}
{\det^{2k+1}W_s}d\,\mu(U)\\ \notag
&=(1+O(q/n))
K_{2k,n-q}\int \dfrac{e^{\hbox{tr}\,\Lambda_{2k}U}\det^{m-q}(1-n^{-1}U)}
{\det^{n-q+2k}U}S_2^q((1-n^{-1}U)\Lambda_{2k})d\,\mu(U)
\end{align}
Recall that we are interested in $\Lambda_{2k}=\Lambda_{0,2k}+\widehat{\xi}/(na)^\beta$, where
$\Lambda_0=\hbox{diag}\{\lambda_0,..,\lambda_0\}$,
$\widehat{\xi}=\hbox{diag}\{\xi_1,..,\xi_{2k}\}$, and $a$, $\beta$ are defined in (\ref{a}), (\ref{bet}). Thus,
\begin{multline}\label{I_2k,q_5}
\widetilde{I}_{2k,q}=(1+O(n^{-\beta})+O(q/n))
K_{2k,n-q}\int \dfrac{e^{\hbox{tr}\,\Lambda_{2k}U}\det^{m-q}(1-n^{-1}U)}
{\det^{n-q+2k}U}\\
\times S_2^q((1-n^{-1}U)\Lambda_{0})d\,\mu(U),
\end{multline}
Let us change variables to $U=W^{*}V W$, where $W$ is a unitary $2k\times 2k$ matrix and
$V=\hbox{diag}\{v_1,\ldots,v_{2k}\}$. Since functions $\mdet (I-n^{-1}U)$,
$S_2((I-n^{-1}U)\Lambda_0)$,
and $\mdet\,U$
are unitary invariant, (\ref{I_2k,q_5}) implies
\begin{equation}\label{I_2k,q_6}
\begin{array}{c}
I_{2k,q}=(1+O(n^{-\beta})+O(q/n))
K_{2k,n-q}\displaystyle\oint\limits_{\omega}\prod\limits_{j=1}^{2k}\dfrac{d\,v_j}{2\pi i}\int
d\,\mu(W)e^{\Tr W^{*}V W\Lambda_{2k}}\Delta^2(V)\\
\dfrac{\mdet^{m-q} (I-n^{-1}V)}{\mdet^{n-q+2k} V }
S_2^q((I-n^{-1}V)\Lambda_0).
\end{array}
\end{equation}
where $\omega$ is any closed contour encircling $0$.
The integral over the unitary group $U(2k)$ can be computed using the
Harish Chandra/Itsykson-Zuber formula (\ref{Its-Zub}). Shifting $v_i\to nv_i$, we obtain
\begin{equation}\label{I_2k,q_fin}
\begin{array}{c}
\widetilde{I}_{2k,q}=\dfrac{S_{2k}K_{2k,n-q}(n^{\beta-1}a^{\beta})^{k(2k-1)}}{n^{2k(n-q)}D_{2k}}
\displaystyle\oint\limits_{\omega}\prod\limits_{j=1}^{2k}\dfrac{d\,v_j}{2\pi i}
e^{n\Tr V \Lambda_{0}+n^{1-\beta}a^{-\beta}\Tr V \widehat{\xi}}\dfrac{\Delta(V)}{\Delta(\widehat{\xi})}\\
\prod\limits_{l=1}^{2k}\dfrac{(1-v_l)^{m-q}}{v_l^{n-q+2k}}S_2^q(I-V)(1+O(n^{-\beta})+O(q/n)).
\end{array}
\end{equation}
Hence, since
\[
\dfrac{n!}{(n-q)!}\cdot \dfrac{\prod_{s=0}^{2k-1}(n-q+s)!}{n^{2k(n-q)+q}}=(2\pi)^k n^{2k^2} e^{-2kn}
(1+O(1/n)),
\]
we get (\ref{F_int}) from (\ref{F_int0}), (\ref{I_2k,q_1}), and (\ref{I_2k,q_fin}).
\subsection{Proofs of Lemmas \ref{l:repr_det}, \ref{l:dop_slag}}
\textbf{Proof of Lemma \ref{l:repr_det}} Let $A$ be a normal matrix. Then we can set $A=V_0^*A_0V_0$ and $U=W^*U_0W$, where $A_0=\hbox{diag}\,(a_1,..,a_p)$,
$U_0=\hbox{diag}\,(u_1,..,u_p)$ and $V_0$, $W$ are the matrices diagonalizing $A$ and $U$ correspondingly.
We obtain
\[
I:=\displaystyle\int\dfrac{e^{\hbox{tr}\,AU}}{\hbox{det}^{p+l}U}
d\,U=\displaystyle\int\dfrac{e^{\hbox{tr}\,V_0^*A_0V_0W^*U_0W}\triangle^2(u_1,\ldots,u_p)}
{\prod\limits_{j=1}^pu_j^{p+l}}
d\,\mu(W)\prod\limits_{j=1}^p\dfrac{du_j}{2\pi i}.
\]
Shifting integration with respect to $W$ as $WV_0^*\to W$ and using (\ref{Its-Zub}), we obtain
\[
\begin{array}{c}
I=q_p\displaystyle\oint\limits_\omega\dfrac{e^{\sum\limits_{j=1}^pa_ju_j}
\triangle(u_1,\ldots,u_p)}{\triangle(A_0)\prod\limits_{j=1}^pu_j^{p+l}}\prod\limits_{j=1}^p
\dfrac{d\,u_j}{2\pi i}\\
=\dfrac{q_p}{\triangle(A_0)}\det\left[\oint\limits_\omega\dfrac{e^{a_ju_j}
}{u_j^{p+l-s}}\dfrac{d\,u_j}{2\pi i}\right]_{j,s=1,0}^{p,p-1}=\dfrac{q_p}{\triangle(A_0)}
\det\left[\dfrac{a_j^{p+l-s-1}}{(p+l-s-1)!}\right]_{j,s=1,0}^{p,p-1}\\
=\dfrac{q_p\triangle(1/a_1,\ldots,1/a_p)\prod\limits_{j=1}^pa_j^{p+l-1}}{\prod\limits_{s=0}^{p-1}
(p+l-s-1)!\triangle(A_0)}=\dfrac{(-1)^{\frac{p(p-1)}{2}}q_p\prod\limits_{j=1}^pa_j^l}{\prod\limits_{s=0}^{p-1}
(p+l-s-1)!}=\dfrac{(-1)^{\frac{p(p-1)}{2}}q_p\hbox{det}^l A}{\prod\limits_{s=0}^{p-1}
(p+l-s-1)!},
\end{array}
\]
and (\ref{repr_det}) is proved for the normal $A$.

Let now $A$ be an arbitrary matrix. According to the polar decomposition, we can write $A=SW$, where
$W$ is a unitary $p\times p$ matrix and $S$ is a diagonal $p\times p$ matrix. Since we proved (\ref{repr_det})
for any normal $A$, we proved it for $S=\hbox{diag}\,\{e^{i\alpha_1},\ldots, e^{i\alpha_p}\}$, $\alpha_1,..,
\alpha_p\in \mathbb{R}$. Besides, it is easy to see that both sides of (\ref{repr_det}) is  analytic
functions of the elements of $S$. Therefore, we are proved (\ref{repr_det}) for any $A$.

$\Box$
\medskip

\textbf{Proof of Lemma \ref{l:dop_slag}}
According to Lemma \ref{l:repr_det} and (\ref{mu_til_2}), we have
\begin{equation}\label{J}
\begin{array}{c}
\intd \widetilde{\mu}_{2k,l}^{(r)}(\Psi)d\,\Psi_{2k,r}=K_{2k,m-l}\int\intd\dfrac{e^{\hbox{tr}\Lambda_{2k}\Psi^{(r)}_{2k}+
\hbox{tr}(1-n^{-1}\Psi^{(r)}_{2k})V}}{\mdet^{m-l+2k}V}d\,\mu(V)d\,\Psi_{2k,r}\\
=K_{2k,m-l}\intd\dfrac{e^{\hbox{tr}V}\mdet^r(\Lambda_{2k}-n^{-1}V)}{\mdet^{m-l+2k}V}d\,\mu(V)=:J.
\end{array}
\end{equation}
It is proved below (see Section 3 and 4 (note that if we change $v\to \lambda_0(1-v)$ in $J$
we obtain the integral like in (\ref{F_int_V}) and (\ref{F_int_V_ed}))) that
\begin{equation}\label{ots_mer}
\big|J\big|\\
\ge
\frac{CK_{2k,m-l}}{n^{2k(m-l)}}\oint\limits_{\widetilde{\omega}} e^{n\sum\limits_{j=1}^{2k}\Re v_j+
n^{1-\beta}a^{-\beta}\sum\limits_{j=1}^{2k}\xi_j\Re v_j}\prod\limits_{j=1}^{2k}
|\lambda_0-v_j|^{r}|\Delta(V)|^{2}\prod\limits_{j=1}^{2k}\frac{|dv_j|}{|v_j|^{m-l+2k}},
\end{equation}
where $a$ and $\beta$ are defined in (\ref{a}) and (\ref{bet}) and
\begin{equation}\label{om_til}
\widetilde{\omega}=\Big\{z\in \mathbb{C}:|z|=\Big(\frac{m-l}{n}\lambda_0\Big)^{1/2}\Big\}.
\end{equation}
 Moreover
the integral outside of the any $n$-independent neighborhood of $v_{\pm}=(\lambda_0+\frac{m-l}{n}-1)/2\pm
\pi\lambda_0\rho(\lambda_0)$ give contribution $O(e^{-nC})$, hence we can can deform $\widetilde{\omega}$
near $z=\big(\frac{m-l}{n}\lambda_0\big)^{1/2}$ such that $|v-\lambda_0|>\delta$ on $\widetilde{\omega}$.
Thus, if we define
\begin{equation}\label{exp}
\langle (\ldots)\rangle=J^{-1}\intd (\ldots)
\widetilde{\mu}_{2k,l}^{(r)}(\Psi)d\,\Psi_{2k,r}
\end{equation}
the definition is correct.

Using (\ref{cont_int_gr}), we get
\begin{equation}\label{int_cont}
\begin{array}{c}
\langle\Phi_r\rangle:=\langle \Phi_r(n^{-1}\Psi^{(r)}_{2k},\sigma^{(r)})\rangle
=\displaystyle\oint\limits_{\Omega} \Phi_r(A,b)\prod\limits_{i,j=1}^{2k}\dfrac{d\,a_{i,j}}{2\pi i}
\prod\limits_{\overline{l},\overline{s}}\dfrac{d\,b_{\overline{l},\overline{s}}}{2\pi i}\\
\times\Big\langle \prod\limits_{i,j=1}^{2k}\dfrac{1}{a_{i,j}-n^{-1}(\Psi^{(r)}_{2k})_{i,j}}
\prod\limits_{\overline{l},\overline{s}}
\dfrac{1}{b_{\overline{l},\overline{s}}-n^{-1}\sigma^{(r)}_{\overline{l},\overline{s}}}
\Big\rangle.
\end{array}
\end{equation}
Thus, to prove Lemma \ref{l:dop_slag}, we have to estimate the expectation above.
Expanding the functions into the series with respect to $\{\Psi^{(r)}_{2k})_{i,j}\}$,
$\{\sigma^{(r)}_{\overline{l},\overline{s}}\}$, we get
\begin{equation}\label{int_1}
\begin{array}{c}
\Big\langle \prod\limits_{i,j=1}^{2k}\dfrac{1}{a_{i,j}-n^{-1}(\Psi^{(r)}_{2k})_{i,j}}
\prod\limits_{\overline{l},\overline{s}}\dfrac{1}{b_{\overline{l},\overline{s}}-n^{-1}\sigma^{(r)}_{\overline{l},\overline{s}}}
\Big\rangle\\
=\sumd\limits_{i,j,\overline{l},\overline{s}}\sumd\limits_{l_{i,j}=1}^{r}
\sum\limits_{t_{\overline{l},\overline{s}}=1}^{r}\Big\langle
\prod\limits_{i,j=1}^{2k}(n^{-1}(\Psi^{(r)}_{2k})_{i,j})^{l_{i,j}}\prod\limits_{\overline{l},\overline{s}}
(n^{-1}\sigma^{(r)}_{\overline{l},\overline{s}})^{t_{\overline{l},\overline{s}}}\Big\rangle
\prod\limits_{i,j=1}^{2k}a_{i,j}^{-l_{i,j}-1}\prod\limits_{\overline{l},\overline{s}}
b_{\overline{l},\overline{s}}^{-t_{\overline{l},\overline{s}}-1}\\
:=\sumd\limits_{i,j,\overline{l},\overline{s}}\sumd\limits_{l_{i,j}=1}^{r}
\sum\limits_{t_{\overline{l},\overline{s}}=1}^{r}M(\{l_{i,j}\}, \{t_{\overline{l},\overline{s}}\})\prod\limits_{i,j=1}^{2k}a_{i,j}^{-l_{i,j}-1}\prod\limits_{\overline{l},\overline{s}}
b_{\overline{l},\overline{s}}^{-t_{\overline{l},\overline{s}}-1}.
\end{array}
\end{equation}
To estimate the moments $\{M(\{l_{i,j}\}, \{t_{\overline{l},\overline{s}}\})\}$, we introduce the generating
function
\begin{equation}
F(\zeta,z):=\Big\langle \exp\Big\{n^{-1}\hbox{tr}\,\zeta\Psi^{(r)}_{2k}+n^{-1}\sum\limits_{\overline{l},\overline{s}}z_{\overline{l},\overline{s}}
\sigma^{(r)}_{\overline{l},\overline{s}}\Big\} \Big\rangle,
\end{equation}
where $\zeta=\{\zeta_{i,j}\}_{i,j=1}^{2k}$. It is easy to see that the derivatives $F(\zeta,z)$ with respect to
$\{\zeta_{i,j}\}$ and $\{z_{\overline{l},\overline{s}}\}$ will give us
the moments $\{M(\{l_{i,j}\}, \{t_{\overline{l},\overline{s}}\})\}$.

 Using Lemma \ref{l:repr_det} and then integrating over
$d\,\Psi_{2k,r}$, we obtain
%
\begin{equation*}
\begin{array}{c}
F(\zeta,z)=\dfrac{K_{2k,m-l}}{J}\intd\dfrac{e^{\hbox{tr}\,V+
\hbox{tr}\,(\Lambda_{2k}+n^{-1}\zeta-n^{-1}V)\Psi^{(r)}_{2k}+
n^{-1}\sum\limits_{\overline{l},\overline{s}}z_{\overline{l},\overline{s}}
\sigma^{(r)}_{\overline{l},\overline{s}}}}{\mdet^{m-l+2k}V}d\,\mu(V)d\,\Psi_{2k,r}\\
=\dfrac{K_{2k,m-l}}{J}\intd\dfrac{e^{\hbox{tr}\,V}}{\mdet^{m-l+2k}V}
\Phi_{1}^{r}(\Lambda_{2k}-n^{-1}V,n^{-1}\zeta,n^{-1}z)d\,\mu(V),
\end{array}
\end{equation*}
where $J$ is defined in (\ref{J}). Moreover, according to (\ref{G_Gr}) -- (\ref{G_Gr_1}), $\Phi_{1}(\Lambda_{2k}-n^{-1}V,n^{-1}\zeta,n^{-1}z)$ is a polynomial of the
entries of $\Lambda_{2k}-n^{-1}V$ and of $\{\zeta_{i,j}\}$, $\{z_{\overline{l},\overline{s}}\}$ with $n$-independent
coefficients and degree at most $2k$ and such that the degree of each variable in $\Phi_{1}(V,n^{-1}\zeta,n^{-1}z)$
is at most one. Here we also used that the integral over $d\,\Psi_{2k,r}$ can be factorized in
$\{\overline{\psi}_{pi}\psi_{pj}\}_{i,j=1}^{2k}$. Besides,
\begin{equation}\label{Phi}
\Phi_{1}(n^{-1}V,n^{-1}\zeta,n^{-1}z)=\mdet(\Lambda_{2k}-n^{-1}V)+\widetilde{f}(\Lambda_{2k}-n^{-1}V,n^{-1}\zeta,n^{-1}z),
\end{equation}
where $\widetilde{f}(\Lambda_{2k}-n^{-1}V,n^{-1}\zeta,n^{-1}z)$ contains all terms of $\Phi_1$ which includes
$\{\zeta_{i,j}\}$ or $\{z_{\overline{l},\overline{s}}\}$.

Recall that we are interested in $\Lambda=\Lambda_0+\widehat{\xi}/(na)^\beta$, where
$\Lambda_0=\hbox{diag}\{\lambda_0,..,\lambda_0\}$,
$\widehat{\xi}=\hbox{diag}\{\xi_1,\ldots,\xi_{2m}\}$, and $a$ and $\beta$ are defined in (\ref{a}) and (\ref{bet}).
Change the variables $v_j\to nv_j$, $j=1,..,2k$, where $\{v_j\}$ are the eigenvalues of $V$,
and replace the integration over the unit circle by the integration over $\widetilde{\omega}$ of (\ref{om_til})
\begin{equation}\label{F_pr}
F(\zeta,z)=\frac{K_{2k,m-l}}{J\cdot n^{2k(m-l)}}\intd\dfrac{e^{n\hbox{tr}\,V}}{\mdet^{m-l+2k}V}
\Phi_{1}^{r}(\Lambda_{2k}-V,n^{-1}\zeta,n^{-1}z)d\,\mu(V)
\end{equation}
We have from the description of $\Phi_1$ and (\ref{Phi})
\begin{equation}\label{ogr_Phi2}
|\Phi_1(\Lambda_{2k}-V,n^{-1}\zeta,n^{-1}z)|\le C|\mdet(\Lambda_{0}-V)|\displaystyle\prod\limits_{i,j=1}^{2k}
\Big(1+\dfrac{C(V)|\zeta_{i,j}|}{n}\Big)
\displaystyle\prod\limits_{\overline{l},\overline{s}}\Big(1+\dfrac{C(V)|z_{\overline{l},\overline{s}}|}{n}\Big),
\end{equation}
where $C(V)>0$ is bounded for $v_j\in \widetilde{\omega}$ with $\widetilde{\omega}$ of (\ref{om_til}) (recall
that we can deform $\widetilde{\omega}$ near $z=\big(\frac{m-l}{n}\lambda_0\big)^{1/2}$ such
 that $|v-\lambda_0|>\delta$ on $\widetilde{\omega}$ ). Since $\{M(\{l_{i,j}\}, \{t_{\overline{l},\overline{s}}\})\}$
 are the derivatives of the generating function, we can write
\begin{equation}
M(\{l_{i,j}\}, \{t_{\overline{l},\overline{s}}\})=\Big\langle \prod\limits_{i,j=1}^{2k}\oint\limits_{\Omega_{i,j}}
\dfrac{l_{i,j}!}{2\pi i} \dfrac{d\,\zeta_{i,j}}{\zeta_{i,j}^{l_{i,j}+1}}
\prod\limits_{\overline{l},\overline{s}}\oint\limits_{\Sigma_{\overline{l},\overline{s}}}
\dfrac{t_{\overline{l},\overline{s}}!}{2\pi i}\dfrac{d\,z_{\overline{l},\overline{s}}}{z_{\overline{l},\overline{s}}^{t_{\overline{l},\overline{s}}+1}}
 F(\zeta,z)
\Big\rangle.
\end{equation}
This, (\ref{F_pr}), and (\ref{ogr_Phi2}) yield
\begin{equation}\label{ineq_1}
|M(\{l_{i,j}\}, \{t_{\overline{l},\overline{s}}\})|\le \prodd\limits_{i,j=1}^{2k} \min\limits_{t
\in\Omega_{i,j}}l_{i,j}!e^{C|t|-l_{i,j}\log |t|}
\prodd\limits_{\overline{l},\overline{s}}
\min\limits_{t\in \Sigma_{\overline{l},\overline{s}}}
t_{\overline{l},\overline{s}}!e^{C|t|-t_{\overline{l},\overline{s}}
\log |t|}
\end{equation}
Choose $\Omega_{i,j}=\{\zeta\in \mathbb{C}: |\zeta|=l_{i,j}/C\}$,
$\Sigma_{\overline{l},\overline{s}}=\{z\in\mathbb{C}:|z|=t_{\overline{l},\overline{s}}/C\}$.
Then (\ref{ineq_1}) yields
\begin{equation}\label{ineq_2}
\begin{array}{c}
|M(\{l_{i,j}\}, \{t_{\overline{l},\overline{s}}\})|\le \prodd\limits_{i,j=1}^{2k} \sqrt{2\pi l_{i,j}}C^{l_{i,j}}
\prodd\limits_{\overline{l},\overline{s}}\sqrt{2\pi t_{\overline{l},\overline{s}}}
C^{t_{\overline{l},\overline{s}}}\\
=\prodd\limits_{i,j=1}^{2k} \sqrt{2\pi l_{i,j}}C^{l_{i,j}}
\prodd\limits_{\overline{l},\overline{s}}\sqrt{2\pi t_{\overline{l},\overline{s}}}
C^{t_{\overline{l},\overline{s}}}.
\end{array}
\end{equation}
Thus, if $|a_{i,j}|>C$ and $|b_{\overline{l},\overline{s}}|>C$ in (\ref{int_1}) we obtain
Lemma \ref{l:dop_slag} from (\ref{J}), (\ref{int_cont}) -- (\ref{int_1}).
$\Box$
\medskip

\section{Asymptotic analysis in the bulk of the spectrum.}
In this section we prove Theorem \ref{thm:1}, passing to the limit $m,n\to\infty$ in (\ref{I_2k,q_fin}) for
$\lambda_j=\lambda_0+\xi_j/n\rho(\lambda_0)$, where $\rho$ is defined in
(\ref{rho_mp}), $\lambda_0\in \sigma$ with $\sigma$ of (\ref{lam_pm}), and $\xi_j\in [-M,M]\subset\mathbb{R}$, $j=1,..,2k$.

To this end consider the function
\begin{equation}\label{V}
V(v,\lambda_0)=-\lambda_0v-c_{m,n}\log (1-v)+\log v +S^*,
\end{equation}
where
\begin{equation}\label{c,S}
c_{m,n}=\dfrac{m}{n},\quad
S^*=\dfrac{\lambda_0-c_{m,n}+1}{2}+\frac{c_{m,n}}{2}
\log\frac{c_{m,n}}{\lambda_0}-
\frac{1}{2}\log\frac{1}{\lambda_0}.
\end{equation}

Then (\ref{F_int}) and (\ref{I_2k,q_fin}) yield
\begin{equation}\label{F_int_V}
D_{2k}^{-1}(\lambda_0)n^{-k^2}F_{2k}(\Lambda_{2k})=Z_{2k}
\oint\limits_{\omega_0} W_{n}(v_1,\ldots,v_{2k})\prod\limits_{j=1}^{2k}d\,v_j(1+o(1)),
\end{equation}
where $D_{2k}$ is defined in (\ref{D_2k}),
\begin{equation}\label{W_n}
\begin{array}{c}
W_{n}(v_1,\ldots,v_{2k})=
e^{-n\sum\limits_{l=1}^{2k}V(v_l,\lambda_0)+\sum\limits_{l=1}^{2k}\frac{\xi_l}{\rho(\lambda_0)} v_l}\dfrac{\triangle(V)}
{\triangle(\widehat{\xi})}\prod\limits_{j=1}^{2k}\dfrac{1}{v_j^{2k}}\\
\times
\exp\Big\{2c_{m,n}\kappa_4S_2((I-V)\Lambda_0)
\prod\limits_{l=1}^{2k}\frac{v_l}{1-v_l}\Big\},
\end{array}
\end{equation}
and
\begin{equation}\label{Z_kp}
Z_{2k}=\dfrac{n^{k^2}\rho(\lambda_0)^{k(k-1)} e^{-2k\kappa_4-\alpha(\lambda_0)\sum\limits_{j=1}^{2k}\xi_j}
}{2^{2k}\pi^{2k}c^{k/2}}.
\end{equation}
Now we need the following lemma
\begin{lemma}\label{l:min_L}
The function $\Re V(v,\lambda_0)$ for $v=\lambda_0^{-1/2}e^{i\varphi},$ $\varphi\in (-\pi,\pi]$ attains its minimum at
\begin{equation}\label{v_pm}
v=v_{\pm}:=\lambda_0^{-1/2}e^{\pm i\varphi_0}:=\dfrac{\lambda_0-c_{m,n}+1}{2\lambda_0}\pm i\pi\rho(\lambda_0).
\end{equation}
Moreover, if $\varphi\not\in U_n(\pm\varphi_0):=(\pm\varphi_0-n^{-1/2}\log n,
\pm\varphi_0+n^{-1/2}\log n)$, then we have for sufficiently big $n$
\begin{equation}\label{ineqv_ReV}
\Re V(\lambda_0^{-1/2}e^{i\varphi},\lambda_0)\ge \dfrac{C\log^2n}{n}.
\end{equation}
\end{lemma}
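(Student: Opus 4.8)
The plan is to parametrize the contour by $v=\lambda_0^{-1/2}e^{i\varphi}$ and compute $\Re V$ of (\ref{V}) explicitly. Since $|v|=\lambda_0^{-1/2}$ is constant, $\Re\log v=-\tfrac12\log\lambda_0$, while $\Re(-\lambda_0 v)=-\lambda_0^{1/2}\cos\varphi$ and $\Re\big(-c_{m,n}\log(1-v)\big)=-\tfrac{c_{m,n}}2\log\big(1-2\lambda_0^{-1/2}\cos\varphi+\lambda_0^{-1}\big)$. Hence, setting $t=\cos\varphi\in[-1,1]$,
\begin{equation}
\Re V(\lambda_0^{-1/2}e^{i\varphi},\lambda_0)=g(t)+S^*-\tfrac12\log\lambda_0,\qquad
g(t):=-\lambda_0^{1/2}t-\frac{c_{m,n}}2\log\big(1-2\lambda_0^{-1/2}t+\lambda_0^{-1}\big).
\end{equation}
Differentiating gives $g''(t)=2c_{m,n}\lambda_0^{-1}\big(1-2\lambda_0^{-1/2}t+\lambda_0^{-1}\big)^{-2}>0$, so $g$ is strictly convex on $[-1,1]$. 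I would exploit this convexity together with the fact that $\varphi\mapsto\cos\varphi$ is even and strictly monotone on $[0,\pi]$: it reduces the analysis of $\Re V$ on $(-\pi,\pi]$ to that of a single convex function of $t$.

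Next I would locate the minimum. The stationarity condition $\tfrac{d}{d\varphi}\Re V=-g'(\cos\varphi)\sin\varphi=0$ gives either $\sin\varphi=0$ (the endpoints $\varphi=0,\pi$, which are maxima of $g(\cos\varphi)$) or $g'(t)=0$, i.e. $t=t_0:=(\lambda_0+1-c_{m,n})/(2\lambda_0^{1/2})$. A direct substitution shows $1-2\lambda_0^{-1/2}t_0+\lambda_0^{-1}=c_{m,n}/\lambda_0$, whence $\lambda_0^{-1/2}t_0=\Re v_\pm$, and, using $\lambda_0^{-1}(1-t_0^2)=\big(4\lambda_0-(\lambda_0+1-c_{m,n})^2\big)/(4\lambda_0^2)$ together with the identity $4\lambda_0-(\lambda_0+1-c)^2=(\lambda_+-\lambda_0)(\lambda_0-\lambda_-)$ following from (\ref{lam_pm}), that $\lambda_0^{-1/2}\sin\varphi_0=\pi\rho(\lambda_0)$ (exactly with $c$ replaced by $c_{m,n}$, which shifts the saddle only by $o(1)$). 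This identifies the two minima with $v_\pm$ of (\ref{v_pm}). Since $\rho(\lambda_0)>0$ for $\lambda_0\in\sigma$, one has $t_0\in(-1,1)$ and $\varphi_0\in(0,\pi)$, so both saddles lie on the contour. A short computation then gives $\Re V(v_\pm,\lambda_0)=0$: the two terms $-\tfrac{\lambda_0+1-c_{m,n}}2$ and $-\tfrac{c_{m,n}}2\log(c_{m,n}/\lambda_0)$ produced by $g(t_0)$ cancel exactly against the constant $S^*-\tfrac12\log\lambda_0$, which is precisely the reason for the normalization $S^*$ of (\ref{c,S}).

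For the quantitative bound (\ref{ineqv_ReV}) I would Taylor-expand at the saddle. Writing $h(\varphi):=\Re V(\lambda_0^{-1/2}e^{i\varphi},\lambda_0)$ one has $h'(\varphi_0)=0$ and $h''(\varphi_0)=g''(t_0)\sin^2\varphi_0$; from $g''(t_0)=2\lambda_0/c_{m,n}$ and $\sin^2\varphi_0=\lambda_0\pi^2\rho^2(\lambda_0)$ this equals $2\lambda_0^2\pi^2\rho^2(\lambda_0)/c_{m,n}$, a strictly positive $n$-independent constant for fixed $\lambda_0\in\sigma$ (and the same at $-\varphi_0$ by symmetry). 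Hence near each saddle $h(\varphi)=\tfrac12 h''(\varphi_0)(\varphi\mp\varphi_0)^2(1+o(1))$, so on the boundary $|\varphi\mp\varphi_0|=n^{-1/2}\log n$ of $U_n(\pm\varphi_0)$ one gets $h\ge C n^{-1}\log^2 n$. Monotonicity of $h$ on each of $[0,\varphi_0]$ and $[\varphi_0,\pi]$, a consequence of the convexity of $g$, propagates this lower bound to all $\varphi$ with $n^{-1/2}\log n\le|\varphi\mp\varphi_0|\le\delta$, while on the complementary compact set $\{|\varphi\mp\varphi_0|\ge\delta\}$ the continuous function $h$ is bounded below by a positive constant $\gg n^{-1}\log^2 n$ because its only zeros are at $\pm\varphi_0$. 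Combining the two regions yields (\ref{ineqv_ReV}). The main technical point is the uniform control of the quadratic term on the whole window $|\varphi\mp\varphi_0|\le\delta$: one must check that the cubic remainder does not overwhelm the $O(n^{-1}\log^2 n)$ main term there, which is exactly where the strict positivity and $n$-independence of $h''(\varphi_0)$, guaranteed by $\rho(\lambda_0)>0$ away from the edges, become essential.
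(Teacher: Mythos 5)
Your proposal is correct and follows essentially the same route as the paper: parametrize the circle $|v|=\lambda_0^{-1/2}$ by $\varphi$, compute the derivatives of $\Re V$ to locate the critical points at $\pm\varphi_0$, verify $\Re V(v_\pm,\lambda_0)=0$ via the normalization $S^*$, and obtain (\ref{ineqv_ReV}) from the quadratic Taylor expansion with the $n$-independent coefficient $(\pi\lambda_0\rho(\lambda_0))^2/c_{m,n}$. Your reformulation in terms of the convex function $g(t)$, $t=\cos\varphi$, merely makes explicit the monotonicity that the paper uses implicitly to propagate the bound outside the windows $U_n(\pm\varphi_0)$; all constants you compute agree with (\ref{ineqv_ne0}).
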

\begin{proof}
Note that for $\varphi\in (-\pi,\pi]$
\begin{equation}\label{ReV}
\Re V(\lambda_0^{-1/2}e^{i\varphi},\lambda_0)=-\lambda_0^{1/2}\cos\varphi-\dfrac{c_{m,n}}{2}\log\left(1+
\lambda_0^{-1}-2\lambda_0^{-1/2}\cos\varphi\right)+\log\lambda_0^{-1}+S^*,
\end{equation}
where $S^*$ and $c_{m,n}$ are defined in (\ref{c,S}). Thus
\begin{align}\notag
\dfrac{d}{d\,\varphi}\Re V(\lambda_0^{-1/2}e^{i\varphi},\lambda_0)&=
\lambda_0^{1/2}\sin\varphi\Big(1-\dfrac{c_{m,n}/\lambda_0}{1+
\lambda_0^{-1}-2\lambda_0^{-1/2}\cos\varphi}\Big),\\ \label{ReV_pr}
\dfrac{d^2}{d\,\varphi^2}\Re V(\lambda_0^{-1/2}e^{i\varphi},\lambda_0)&=
\lambda_0^{1/2}\cos\varphi\Big(1-\dfrac{c_{m,n}/\lambda_0}{1+
\lambda_0^{-1}-2\lambda_0^{-1/2}\cos\varphi}\Big)\\ \notag
&+\dfrac{2c_{m,n}\sin^2\varphi/\lambda_0}{(1+
\lambda_0^{-1}-2\lambda_0^{-1/2}\cos\varphi)^2},
\end{align}
and $\varphi=\pm\varphi_0$ of (\ref{v_pm}) are the minimum points of
$\Re V(\lambda_0^{-1/2}e^{i\varphi},\lambda_0)$.
Writing
\begin{equation}\label{V_pm}
V_\pm:=V(v_{\pm},\lambda_0)=\mp i\lambda_0^{-1/2}\sin\varphi_0\pm i\varphi_0\pm
ic_{m,n}\arcsin \dfrac{\lambda_0^{-1/2}\sin\varphi_0}{1+
\lambda_0^{-1}-2\lambda_0^{-1/2}\cos\varphi_0},
\end{equation}
we conclude that
\[
\Re V(v_{\pm},\lambda_0)=0.
\]
Expanding $\Re V(\lambda_0^{-1/2}e^{i\varphi},\lambda_0)$ into the Taylor series and using (\ref{ReV_pr}) -- (\ref{V_pm}),
we obtain for $\varphi\in U_n(\pm\varphi_0)$:
\begin{equation}\label{ineqv_ne0}
\Re V(\lambda_0^{-1/2}e^{i\varphi},\lambda_0)=\dfrac{(\pi\lambda_0\rho(\lambda_0))^2}{c_{m,n}}(\varphi\mp \varphi_0)^2+O(n^{-3/2}\log^3n),
\end{equation}
where $\varphi_0$ is defined in (\ref{v_pm}). This and $c_{m,n}\to c,\,m,n\to\infty$ imply for $\varphi\not\in U_n(\pm\varphi_0)$
\[
\Re V(\lambda_0^{-1/2}e^{i\varphi},\lambda)\ge \dfrac{C\log^2n}{n}.
\]
The lemma is proved.
\end{proof}
Note that $|v_j|=\lambda_0^{-1/2}$, $j=1,..,2k$. Since $\xi_1,\ldots,\xi_{2k}$ are distinct,
the inequality $|\triangle(T)/\triangle(\widehat{\xi})|\le C_1$ and (\ref{ineqv_ReV}) yield
\begin{equation*}
\bigg|Z_{2k}\oint\limits_{\omega_0\setminus(U_{v,+}\cup U_{v,-})}\oint\limits_{\omega_0}..\oint\limits_{\omega_0}
W_n(v_1,\ldots,v_{2k})\prod\limits_{j=1}^{2k}d\,v_j
\bigg|\le C_1n^{k^2}e^{-C_2\log^2n},
\end{equation*}
where
\begin{equation}\label{om_0}
\omega_0=\{z\in \mathbb{C}: |z|=\lambda_0^{-1/2}\},
\end{equation}
$W_n$ and $Z_{2k}$ are defined in (\ref{W_n}) and
(\ref{Z_kp}) respectively, and
\begin{eqnarray}\label{U_pm}
U_{\pm}&=&\{\varphi\in (-\pi,\pi]: |\pm\varphi_0-\varphi|\le n^{-1/2}\log n\},\\ \notag
U_{v,\pm}&=&\{z=\lambda_0^{-1/2}e^{i\varphi}|\varphi\in U_{\pm}\}
\end{eqnarray}
with $\varphi_0$ of (\ref{v_pm}).

  Note that we have for $\varphi\in U_{\pm}$ in view of (\ref{V}) and (\ref{V_pm}) as $m,n\to\infty$
\begin{equation}\label{as_razl_V}
V(\lambda_0^{-1/2}e^{i\varphi},\lambda_0)=V_{\pm}+\left(\dfrac{1}{v_{\pm}^2}-\dfrac{c_{m,n}}{(1-v_\pm)^2}\right)\lambda_0^{-1}e^{\pm2i\varphi_0}
\dfrac{(\varphi \mp\varphi_0)^2}{2}+
f_\pm(\varphi\mp \varphi_{0}),
\end{equation}
where $f_\pm(\varphi\mp \varphi_{0})=O((\varphi\mp \varphi_{0})^3)$.
Shifting $\varphi_j\mp \varphi_0\to \varphi_j$ for $\varphi_j\in U_{\pm}$ and using (\ref{V_pm})
we obtain
\begin{equation}\label{int_okr1}
\begin{array}{c}
\dfrac{D_{2k}^{-1}(\lambda_0)}{(n\rho(\lambda_0))^{k^2}}
F_{2k}(\Lambda_{2k})=Z_{2k}\lambda_0^{k(2k-1)/2}\sum\limits_{s=1}^{2k}\sum\limits_{\alpha^s}
\intd\limits_{(U_{n})^{2k}}e^{G_s(\varphi_1,\ldots,\varphi_{2k})+\sum\limits_{j=1}^{2k}d_{\alpha_j}(\varphi_j)\xi_j}
\dfrac{\Delta(V^{\alpha^s})}{\Delta(\widehat{\xi})}\\
\prod\limits_{l=1}^{s}e^{-\frac{nc_{+}}{2}\varphi_l^2-(2k-1)i(\varphi_j+\varphi_0)}\prod\limits_{j=s+1}^{2k}
e^{-\frac{nc_{-}}{2}\varphi_j^2-(2k-1)i(\varphi_j-\varphi_0)}\prod\limits_{j=1}^{2k}d\,\varphi_j
=\sumd\limits_{s=1}^{2k}
\sumd\limits_{\alpha^s}T_{s,\alpha},
\end{array}
\end{equation}
where $\alpha^s=\{\alpha_j\}_{j=1}^{2k}$ is a permutation of $s$ pluses and $2k-s$ minuses,
\begin{align}\label{d,G}
&d_{\pm}(\varphi_j)=\frac{e^{i(\varphi_j\pm\varphi_0)}}
{\sqrt{\lambda_0}\rho(\lambda_0)},\quad V^{s}=\hbox{diag}\{e^{i(\varphi_1+\varphi_0)},..,e^{i(\varphi_s+\varphi_0)},
e^{i(\varphi_{s+1}-\varphi_0)},..,e^{i(\varphi_{2k}-
\varphi_0)}\},\\ \notag
&G_s(\varphi_1,\ldots,\varphi_{2k})=2c_{m,n}\kappa_4S_2((I-V^{s})\Lambda_0)
\prod\limits_{l=1}^{s}\frac{\lambda_0^{-1/2}e^{i(\varphi_l+\varphi_0)}}
{1-\lambda_0^{-1/2}e^{i(\varphi_l+\varphi_0)}}\prod\limits_{r=s+1}^{2k-s}\frac{\lambda_0^{-1/2}e^{i(\varphi_r-\varphi_0)}}
{1-\lambda_0^{-1/2}e^{i(\varphi_r-\varphi_0)}}\\ \notag
&-n\sum\limits_{l=1}^{s}(f_+(\varphi_l)+V_+(\varphi_l))-n\sum\limits_{r=s+1}^{2k-s}(f_-(\varphi_r)+V_-(\varphi_r)),
\end{align}
\begin{align}\label{cp}
c_{\pm}&=\left(\dfrac{1}{v_{\pm}^2}-\dfrac{c_{m,n}}{(1-v_\pm)^2}\right)\lambda_0^{-1}e^{\pm2i\varphi_0},
\quad U_{n}=(-n^{-1/2}\log n, n^{-1/2}\log n),\\ \notag
V^{\alpha^s}&=\hbox{diag}\{e^{i(\varphi_1+\alpha_1\varphi_0)},\ldots,e^{i(\varphi_{2k}+
\alpha_{2k}\varphi_0)}\}.
\end{align}
Define
\begin{align}\notag
I_s&:=\intd\limits_{\Omega_{n,s}}
e^{\frac{1}{\sqrt{n}}\sum\limits_{j=1}^s\xi_jg(\varphi_j)}
\prod\limits_{j<l}(\varphi_j-\varphi_l)d\,\nu_{s}(\varphi_1,\ldots,\varphi_s)\\ \label{I}
&=\intd\limits_{\Omega_{n,s}}\det\Big\{e^{\frac{1}{\sqrt{n}}\xi_jg(\varphi_j)}
\varphi_j^{l-1}\Big\}_{j,l=1}^sd\,\nu_{s}(\varphi_1,\ldots,\varphi_s)\\ \notag
&=\sum\limits_{p_1,..,p_s=0}^\infty\intd\limits_{\Omega_{n,s}}\det\Big\{
(n^{-1/2}\xi_jg(\varphi_j))^{p_j}\varphi_j^{l-1}/p_j!
\Big\}_{j,l=1}^sd\,\nu_{s}(\varphi_1,\ldots,\varphi_s),
\end{align}
where $d\,\nu_{s}(\varphi_1,\ldots,\varphi_s)$ is a measure on $\Omega_{n,s}:=(-\log n,\log n)^s$ which is symmetric
in\\
 $(\varphi_1,..,\varphi_s)$ and $g(\varphi)$ is a function such that $g(\varphi)=C\varphi(1+o(1))$,
$n\to\infty$. Note that if we take the term of (\ref{I}) such that $p_{s_1}=p_{s_2}$, $s_1\ne s_2$, then this term
is zero since $d\,\nu_{s}(\varphi_1,\ldots,\varphi_s)$ is symmetric in $(\varphi_1,\ldots,\varphi_s)$.
Moreover, the order of
$$\det\big\{(n^{-1/2}\xi_jg(\varphi_j))^{p_j}\varphi_j^{l-1}/p_j!
\big\}_{j,l=1}^s$$ is $n^{-(p_1+..+p_s)/2}$ and if $\{p_1,..,p_s\}\ne \{0,1,..,s-1\}$ the order is less than
$n^{-s(s-1)/2}$. Hence, denoting by $\widetilde{\sum}$
the sum over all permutations $\{p_1,..,p_s\}$ of $\{0,1,..,s-1\}$, we obtain
\begin{align}\notag
I_s&=\frac{n^{-s(s-1)/2}}{\prod_{j=0}^{s-1}j!}\widetilde{\sum}
\intd\limits_{\Omega_{n,s}}\prod\limits_{j=1}^s(\xi_jg(\varphi_j))^{p_j}
\triangle(\varphi_1,..,\varphi_s)d\,\nu_{s}(\varphi_1,\ldots,\varphi_s) (1+o(1))\\  \label{I1}
&=\frac{\triangle(\xi_1,..,\xi_s)}{n^{s(s-1)/2}\prod_{j=0}^{s-1}j!}
\intd\limits_{\Omega_{n,s}}
\prod\limits_{j=1}^{s}g(\varphi_j)^{j-1}\triangle(\varphi_1,..,\varphi_s)d\,\nu_{s}(\varphi_1,\ldots,\varphi_s)
 (1+o(1))\\ \notag
&=\frac{\triangle(\xi_1,..,\xi_s)}{n^{s(s-1)/2}\prod_{j=0}^{s}j!}
\intd\limits_{\Omega_{n,s}}
\triangle(g(\varphi_1),..,g(\varphi_s))\triangle(\varphi_1,..,\varphi_s)d\,\nu_{s}(\varphi_1,\ldots,\varphi_s) (1+o(1)),
\end{align}
Since $g(\varphi)=C\varphi(1+o(1))$,
$n\to\infty$, we get
\begin{equation}\label{I2}
I=\frac{C^{s(s-1)/2}\triangle(\xi_1,..,\xi_s)}{n^{s(s-1)/2}\prod_{j=0}^{s}j!}
\intd\limits_{\Omega_{n,s}}
\triangle^2(\varphi_1,..,\varphi_s)d\,\nu_{s}(\varphi_1,\ldots,\varphi_s)
(1+o(1)).
\end{equation}
Consider $T_\alpha$ of (\ref{int_okr1}) with $\alpha_1=\ldots=\alpha_s=+$, $\alpha_{s+1}=\ldots=\alpha_{2k}=-$.
Since the function
 $2c_{m,n}\kappa_4S_2((I-V_\alpha)\Lambda_0)
\prod\limits_{l=1}^{2k}\frac{\lambda_0^{-1/2}e^{i(\varphi_l+\alpha_l\varphi_0)}}
{1-\lambda_0^{-1/2}e^{i(\varphi_l+\alpha_l\varphi_0)}}$ is symmetric in $(\varphi_1,..,\varphi_s)$ and
$(\varphi_{s+1},..,\varphi_{2k})$, changing variables as $\sqrt{n}\varphi_j\to \varphi_j$ and
using formulas (\ref{I}) -- (\ref{I2}), and formula for the Selberg integral (see, e.g., \cite{Me:91}, Chapter 17),
we obtain
\begin{align}\notag
T_\alpha=&\dfrac{C_{0,s}(\hat{\xi})}{n^{(k-s)^2}\prod_{j=0}^{s}j!\prod_{l=0}^{2k-s}l!}
\intd\limits_{-\log n}^{\log n}\prod\limits_{j=1}^{2k}d\,\varphi_j\,
\triangle^2(\varphi_1,..,\varphi_s)\prod\limits_{j=1}^{s}
e^{-\frac{c_{+}\varphi_j^2}{2}}\\ \label{term_s}
&\times\triangle^2(\varphi_{s+1},..,\varphi_{2k})\prod\limits_{l=s+1}^{2k}e^{-\frac{c_{-}\varphi_l^2}{2}}
(1+o(1))
=\dfrac{C_{0,s}(\hat{\xi})(2\pi)^k}{c_+^{s^2/2}c_-^{(2k-s)^2/2}n^{(k-s)^2}}
(1+o(1)),
\end{align}
where $C_{0,s}(\hat{\xi})$ is an $n$-independent constant. This expression is of order $O(1)$ for $s=k$, and
it is of order $o(1)$ for $s\ne k$.
Hence, only the terms of (\ref{int_okr1}) with exactly $k$ of $\{\alpha_j\}_{j=1}^{2k}$ pluses contribute in
the limit (\ref{lim1}). If we take $s=k$ we obtain
\begin{align}\notag
C_{0,k}(\hat{\xi})=\lambda_0^{k(2k-1)/2}\bigg(\dfrac{e^{2i\varphi_0}}{\lambda_0^{1/2}\rho(\lambda_0)}\bigg)^{\frac{k(k-1)}{2}}
\bigg(\dfrac{e^{-2i\varphi_0}}{\lambda_0^{1/2}\rho(\lambda_0)}\bigg)^{\frac{k(k-1)}{2}}
(2i\sin\varphi_0)^{k^2}\\ \label{C_0}
\dfrac{\exp\{i\pi(\xi_{1}+...+\xi_{k}-\xi_{k+1}-...-\xi_{2k})\}}{\prod_{j=1}^k\prod_{l=k+1}^{2k}(\xi_j-\xi_{l})}
e^{k(k-1)\kappa_4(c-\lambda_0+1)^2c^{-1}}\\ \notag
=\dfrac{\lambda_0^{k^2}(2i\pi\rho(\lambda_0))^{k^2}}{(2\pi)^{2k}c^{k/2}}
e^{k(k-1)\kappa_4(c-\lambda_0+1)^2c^{-1}}\dfrac{e^{i\pi(\xi_{1}+...+
\xi_{k}-
\xi_{k+1}-...-\xi_{2k})}}{\prod_{j=1}^k\prod_{l=k+1}^{2k}(\xi_j-\xi_{l})}.
\end{align}
Hence, since it is easy to check that
\[
c_+c_-=\dfrac{4\pi^2\lambda_0^2\rho(\lambda_0)^2}{c_{m,n}},
\]
we get from (\ref{term_s}) and (\ref{C_0}) that $T_\alpha$ of (\ref{int_okr1}) with
$\alpha_1=...=\alpha_k=+$, $\alpha_{k+1}=...=\alpha_{2k}=-$ has the form
\begin{equation}\label{int_add1}
\dfrac{i^{k(k+1)}e^{i\pi(\xi_1+...+\xi_k-\xi_{k+1}-...-\xi_{2k})}}{(2i\pi)^k\prod\limits_{i,j=1}^k
(\xi_i-\xi_{k+j})}c^{k(k-1)/2}e^{k(k-1)\kappa_4(c-\lambda_0+1)^2c^{-1}}
\end{equation}
In view of the identity
\[
\dfrac{\mdet\left\{\dfrac{\sin(\pi(\xi_j-\xi_{k+l}))}{\pi(\xi_j-\xi_{k+l})}\right\}_{j,l=1}^k}{\Delta(\xi_1,..,\xi_k)
\Delta(\xi_{k+1},..,\xi_{2k})}=\dfrac{\mdet\left\{\dfrac{e^{i\pi(\xi_j-\xi_{k+l})}-e^{i\pi(\xi_{k+l}-\xi_j)}}
{\xi_j-\xi_{k+l}}\right\}_{j,l=1}^k}{
(2i\pi)^k\Delta(\xi_1,..,\xi_k)
\Delta(\xi_{k+1},..,\xi_{2k})}
\]
the determinant in the l.h.s. of (\ref{int_add1}) is a linear combination of $\exp\{i\pi\sum\limits_{j=1}^{2k}\alpha_j\xi_j\}$ over
the collection $\{\alpha_j\}_{j=1}^{2k}$, in which $m$ elements are pluses, and the rest are minuses.
By the virtue of the following formula (see \cite{Po-Se:76}, Problem 7.3)
\begin{equation}\label{iden}
(-1)^{\frac{k(k-1)}{2}}\dfrac{\prod_{j<l}(a_j-a_l)(b_j-b_l)}{\prod_{j,l=1}^k(a_j-b_l)}
=\det\left\{(a_j-b_l)^{-1}\right\}_{j,l=1}^m.
\end{equation}
the coefficient of $\exp\{i\pi(\xi_{k+1}+...+\xi_{2k}-
\xi_1-...-\xi_k)\}$ is
\[
\dfrac{\mdet\left\{(\xi_{k+l}-\xi_j)^{-1}\right\}_{j,l=1}^k}{
(2i\pi)^k\Delta(\xi_1,..,\xi_k)
\Delta(\xi_{k+1},..,\xi_{2k})}=\dfrac{(-1)^{\frac{k(k-1)}{2}}}{(-1)^{k^2}(2i\pi)^k\prod\limits_{i,j=1}^k(\xi_i-
\xi_{k+j})}.
\]
Other coefficients can be computed analogously. Thus, restricting the sum in (\ref{int_okr1})
to that over the collection $\{\alpha_j\}_{j=1}^{2k}$, in which exactly $k$ elements are pluses, and $k$ are minuses,
and using (\ref{int_add1}), we obtain Theorem \ref{thm:1} after a certain algebra.

\section{Asymptotic analysis at the edge of the spectrum.}

Let now $\lambda_0=\lambda_+$ (for $\lambda_0=\lambda_-$ the proof is similar) and
$\lambda_j=\lambda_++\xi_j/(n\gamma_+)^{2/3}$, $j=1,..,2k$, where $\lambda_+$ and $\gamma_+$ are defined in
(\ref{lam_pm}) and (\ref{rho_mp}), and $\xi_1,\ldots,\xi_{2k}\in [-M,M]\subset\mathbb{R}$.

According to (\ref{F_int}) we have
\begin{equation}\label{F_int_V_ed}
\dfrac{D_{2k}^{-1}(\lambda_+)}{(n\gamma_+)^{2k^2/3}}F_{2k}(\Lambda_{2k})=\widetilde{Z}_{2k}
\oint\limits_{\omega_0} \widetilde{W}_{n}(v_1,\ldots,v_{2k})\prod\limits_{j=1}^{2k}d\,v_j(1+o(1)),
\end{equation}
where $D_{2k}$ is defined in (\ref{D_2k}),
\begin{align}\label{W_n_ed}
\widetilde{W}_{n}(v_1,\ldots,v_{2k})&=
\exp\Big\{-n\sum\limits_{l=1}^{2k}V^{(+)}(v_l)+\sum\limits_{l=1}^{2k}\frac{n^{1/3}\xi_l}{\gamma^{2/3}} v_l+n(c_{m,n}-c)
\sum\limits_{l=1}^{2k}\log (1-v_l)\Big\}\\ \notag
&\times\dfrac{\triangle(V)}{\triangle(\widehat{\xi})}
\exp\Big\{2c_{m,n}\kappa_4S_2((I-V)\Lambda_0)
\prod\limits_{l=1}^{2k}\frac{v_l}{1-v_l}\Big\}\prod\limits_{j=1}^{2k}\dfrac{1}{v_j^{2k}},
\end{align}
\begin{equation}\label{V_ed}
V^{(+)}(v)=-\lambda_0v-c\log (1-v)+\log v-S_+,
\end{equation}
\begin{equation}\label{S_+}
S_+=-1-\sqrt{c}-c\log (1-(1+\sqrt{c})^{-1})-\log (1+\sqrt{c}),
\end{equation}
and
\begin{align}\label{Z_kp_ed}
\widetilde{Z}_{2k}&=L_{2k}e^{-n^{1/3}\alpha(\lambda_+)
\sum\limits_{j=1}^{2k}\xi_j
+2k(nc-m)\log(1-\lambda_+^{-1/2})},\\
L_{2k}&=\dfrac{n^{k(2k+1)/3}\gamma^{2k(k-1)/3} e^{-2k\kappa_4-2k(nc-m)\log(1-\lambda_+^{-1/2})}
}{2^{2k}\pi^{2k}c^{k/2}}.
\end{align}
We need the following lemma
\begin{lemma}\label{l:min_L_ed}
The function $\Re V^{(+)}(v)$ for $v=\lambda_+^{-1/2}e^{i\varphi},$ $\varphi\in (-\pi,\pi]$ attains its minimum at
\begin{equation}\label{v_pm_ed}
v_0:=\lambda_+^{-1/2}=(1+\sqrt c)^{-1}.
\end{equation}
Moreover, if $v\in\omega_0=\{v\in\mathbb{C}:v=\lambda_+^{-1/2}e^{i\varphi},\,\varphi\in (-\pi,\pi]\}$, $|v-v_0|\ge \delta$, where $\delta$ is small enough, then we have
for sufficiently big~$n$
\begin{equation}\label{ineqv_ReV_ed}
\Re V^{(+)}(v)\ge C\delta^4.
\end{equation}
\end{lemma}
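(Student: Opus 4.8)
The plan is to mirror the proof of Lemma \ref{l:min_L}, reducing the whole statement to a one–variable study of $\Re V^{(+)}$ on the circle $v=\lambda_+^{-1/2}e^{i\varphi}$, $\varphi\in(-\pi,\pi]$. The essential new feature, compared with the bulk case, is that at the edge the critical point is \emph{degenerate}: the analogue of the second derivative in (\ref{ReV_pr}) vanishes, so that the quadratic term disappears and the leading behaviour near the minimum is quartic. This is exactly what produces the bound $C\delta^4$ in (\ref{ineqv_ReV_ed}) instead of the quadratic one in Lemma \ref{l:min_L}.

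First I would write $\Re V^{(+)}$ explicitly. Setting $r:=\lambda_+^{-1/2}=(1+\sqrt c)^{-1}$ and using $\lambda_+ r=\lambda_+^{1/2}=1+\sqrt c$ together with $|1-re^{i\varphi}|^2=1+r^2-2r\cos\varphi$, one obtains from (\ref{V_ed}) a function depending only on $\cos\varphi$ (hence even in $\varphi$):
\[
\Re V^{(+)}(re^{i\varphi})=-(1+\sqrt c)\cos\varphi-\frac c2\log\!\big(1+r^2-2r\cos\varphi\big)+\log r-S_+ .
\]
Differentiating gives
\[
\frac{d}{d\varphi}\Re V^{(+)}(re^{i\varphi})=\sin\varphi\,B(\varphi),\qquad
B(\varphi):=(1+\sqrt c)-\frac{cr}{1+r^2-2r\cos\varphi}.
\]

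The heart of the argument, and the step I expect to require the most care, is the behaviour of $B$. At $\varphi=0$ one has $1+r^2-2r=(1-r)^2=c(1+\sqrt c)^{-2}$, and a short computation gives $cr/(1-r)^2=1+\sqrt c$, so that $B(0)=0$; this is the degeneracy absent in (\ref{ReV_pr}). For $\varphi\neq0$ the denominator $1+r^2-2r\cos\varphi$ is strictly larger than $(1-r)^2$, whence $B(\varphi)>0$. Consequently $\frac{d}{d\varphi}\Re V^{(+)}$ has the sign of $\varphi$ on $(-\pi,\pi]$, so $\Re V^{(+)}$ is strictly decreasing on $(-\pi,0)$ and strictly increasing on $(0,\pi)$; this shows that $v_0=r=(1+\sqrt c)^{-1}$ is the unique minimizer on $\omega_0$. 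I would then verify that the normalising constant $S_+$ of (\ref{S_+}) is chosen precisely so that $\Re V^{(+)}(v_0)=V^{(+)}(v_0)=0$, which follows by cancelling the two $\log(1-v_0)$ terms and using $-(1+\sqrt c)+1+\sqrt c=0$.

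Finally I would turn monotonicity into the quantitative estimate (\ref{ineqv_ReV_ed}). Expanding $B(\varphi)=(1+\sqrt c)\,r(1-r)^{-2}\varphi^2+O(\varphi^4)=\frac{(1+\sqrt c)^2}{c}\varphi^2(1+o(1))$ and integrating $\sin\varphi\,B(\varphi)$ yields the Taylor expansion $\Re V^{(+)}(re^{i\varphi})=\frac{(1+\sqrt c)^2}{4c}\varphi^4+O(\varphi^6)$ near the origin. Since this positive quartic controls $\Re V^{(+)}$ near $0$ while, by the strict monotonicity just proved, $\Re V^{(+)}$ is bounded below by a positive constant on each set $\{|\varphi|\ge\epsilon\}$, a compactness argument gives $C>0$ with $\Re V^{(+)}(re^{i\varphi})\ge C\varphi^4$ for all $\varphi\in(-\pi,\pi]$. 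Using $|v-v_0|=2r|\sin(\varphi/2)|\le r|\varphi|$, the condition $|v-v_0|\ge\delta$ forces $|\varphi|\ge\delta/r$, hence $\Re V^{(+)}(v)\ge Cr^{-4}\delta^4$, which is (\ref{ineqv_ReV_ed}) after renaming the constant. The only genuinely new point is the degeneracy $B(0)=0$ and the resulting bookkeeping of the quartic coefficient; the remaining monotonicity-plus-Taylor scheme is identical to that of Lemma \ref{l:min_L}.
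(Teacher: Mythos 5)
Your proof is correct and follows essentially the same route as the paper: factor $\frac{d}{d\varphi}\Re V^{(+)}(\lambda_+^{-1/2}e^{i\varphi})$ as $\sin\varphi$ times a bracket that vanishes only at $\varphi=0$, deduce monotonicity on either side of $v_0$, and combine the quartic Taylor expansion with $|v-v_0|\le\lambda_+^{-1/2}|\varphi|$. (Incidentally, your quartic coefficient $(1+\sqrt c)^2/(4c)$ is the correct one; the paper's stated value $\frac{d^4}{d\varphi^4}\Re V^{(+)}(v_0)=6$, i.e.\ coefficient $1/4$, is off by a $c$-dependent factor, which is immaterial for the bound $C\delta^4$.)
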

\begin{proof}
Similarly to (\ref{ReV}) -- (\ref{ReV_pr}) we have
\begin{eqnarray}\label{ReV_pr_ed}
\dfrac{d}{d\,\varphi}\Re V^{(+)}(v_0)&=&\dfrac{d^2}{d\,\varphi^2}\Re V^{(+)}(v_0)
=\dfrac{d^3}{d\,\varphi^3}\Re V^{(+)}(v_0)=0,\\
\dfrac{d^4}{d\,\varphi^4}\Re V^{(+)}(v_0)&=&6.
\end{eqnarray}
Hence, $\varphi=0$ is a minimum point of the function $\Re V^{(+)}(\lambda_+^{-1/2}e^{i\varphi})$,
and $\Re V^{(+)}(\lambda_+^{-1/2}e^{i\varphi})$ is monotone increasing function for $\varphi\in [0,\pi)$
and monotone decreasing function for
 $\varphi\in (-\pi,0]$.

Expanding $\Re V^{(+)}(\lambda_+^{-1/2}e^{i\varphi})$ into the Taylor series
we obtain for $|\varphi|\le \delta$ similarly to (\ref{ineqv_ne0}):
\begin{equation}\label{ineqv_ne0_ed}
\Re V^{(+)}(\lambda_+^{-1/2}e^{i\varphi})=\varphi^4/4+O(\varphi^5).
\end{equation}
This and monotonicity of $\Re V^{(+)}(\lambda_+^{-1/2}e^{i\varphi})$ for $\varphi\ne 0$ imply
for $\varphi\in (-\pi,\pi]$, $|\varphi|\ge \delta$
\[
\Re V^{(+)}(\lambda_+^{-1/2}e^{i\varphi})\ge C\delta^4.
\]
Since $|v-v_0|=2\lambda_+^{-1/2}|\sin (\varphi/2)|\le \lambda_+^{-1/2} |\varphi|$, we get
(\ref{ineqv_ne0_ed}).
\end{proof}
Note that $|v_j|=\lambda_+^{-1/2}$, $j=1,..,2k$ and according to (\ref{cond_m}) $c_{m,n}-c=o(n^{-2/3})$,
$m,n\to\infty$. Since $\xi_1,\ldots,\xi_{2k}$ are distinct,
the inequality $|\triangle(T)/\triangle(\widehat{\xi})|\le C_1$ and (\ref{ineqv_ReV_ed}) yield
\begin{multline*}
\bigg|\widetilde{Z}_{2k}\int\limits_{\omega_0\setminus U_{\delta}(v_0)}\oint\limits_{\omega_0}..\oint\limits_{\omega_0}
\widetilde{W}_n(v_1,\ldots,v_{2k})\prod\limits_{j=1}^{2k}d\,v_j
\bigg|\\
\le C_1n^{k(2k+1)/3}e^{-C_2n(1+o(1))+C_3n^{1/3}},\,\,m,n\to\infty,
\end{multline*}
where $$U_\delta(v_0)=\{v\in\omega_0: |v-v_0|\le\delta\}.$$ Hence,
\begin{equation}\label{Int_okr1_ed}
\dfrac{D_{2k}^{-1}(\lambda_+)}{(n\gamma_+)^{2k^2/3}}F_{2k}(\Lambda_{2k})=\widetilde{Z}_{2k}
\oint\limits_{U_\delta(v_0)}
\widetilde{W}_{n}(v_1,\ldots,v_{2k})\prod\limits_{j=1}^{2k}d\,v_j(1+o(1))+O(e^{-Cn}).
\end{equation}
Since
\begin{equation*}
\dfrac{d}{d\,v} V^{(+)}(v_0)=\dfrac{d^2}{d\,v^2} V(v_0)=0,
\end{equation*}
we have for $|v-v_0|\le\delta$
\begin{equation}\label{V_okr}
V^{(+)}(v)=\gamma^{-2}(v-v_0)^3/3+O((v-v_0)^4), \quad |v-v_0|\to 0.
\end{equation}
Thus, we can write for $v$ satisfying $|v-v_0|\le\delta$
\begin{equation}\label{v_sm}
V^{(+)}(v)=\gamma^{-2}\chi^3(v)/3,
\end{equation}
where $\chi(v)$ is analytic in the $\delta$-neighborhood of $v_0$ with the analytic
inverse $z(\varphi)$ (we choose $\chi(v)$ such that $\chi(v)\in\mathbb{R}$ for $v\in\mathbb{R}$).

Changing variables to $v_j=z(\varphi_j)$, $j=1,..,2k$, we rewrite (\ref{Int_okr1_ed}) as
\begin{align}\notag
&\dfrac{D_{2k}^{-1}(\lambda_+)}{(n\gamma_+)^{2k^2/3}}F_{2k}(\Lambda_{2k})=L_{2k}
\oint\limits_{\widetilde{U}_{\delta,\varphi}}
e^{-n\gamma^{-2}\sum\limits_{l=1}^{2k}\varphi_l^3/3+\sum\limits_{l=1}^{2k}\frac{n^{1/3}\xi_l}{\gamma^{2/3}}
 (z(\varphi_l)-v_0)+n(c_{m,n}-c)\sum\limits_{l=1}^{2k}\log\frac{1-z(\varphi_l)}{1-v_0}}
\\ \label{Int_okr2_ed}
 &\quad \times
e^{2c_{m,n}\kappa_4S_2((I-Z)\Lambda_0)
\prod\limits_{l=1}^{2k}\frac{z(\varphi_l)}{1-z(\varphi_l)}}\dfrac{\triangle(Z)}
{\triangle(\widehat{\xi})}
\prod\limits_{j=1}^{2k}\frac{z^\prime(\varphi_j)}{z^{2k}(\varphi_j)}d\,\varphi_j(1+o(1))+O(e^{-Cn})\\ \notag
 &\quad =: L_{2k}\intd\limits_{\widetilde{U}_{\delta,\varphi}} \widehat{W}(\varphi_1,\ldots,\varphi_{2k})\prod\limits_{j=1}^{2k}
d\,\varphi_j(1+o(1))+O(e^{-Cn}),
\end{align}
where $L_{2k}$ is defined in (\ref{Z_kp_ed}),
\begin{equation}\label{Z}
Z=\hbox{diag}\,\{z(\varphi_1),\ldots, z(\varphi_{2k})\},
\end{equation}
\begin{equation}\label{U_del,phi}
\widetilde{U}_{\delta,\varphi}=\{\varphi\in \mathbb{C}: z(\varphi)\in U_\delta(v_0)\}.
\end{equation}
Moreover, we have from (\ref{v_sm})
\begin{equation}\label{chi_v_0}
\chi(v_0)=0,\quad \dfrac{d}{d\,z}\chi(v_0)=1,
\end{equation}
hence
\begin{equation}\label{ogr_pr_chi}
0<C_1<|\chi^\prime(v)|<C_2, \quad |v-v_0|\le\delta.
\end{equation}
If $\widetilde{\sigma}=\{z\in\mathbb{C}:|z-z^*_{0,n}|\le \delta\}$, then $\chi(\partial\widetilde{\sigma})$ is a closed
curve encircling $\varphi=0$ and lying between the circles
$\sigma_1=\{\varphi\in\mathbb{C}:|\varphi|=C_1\delta\}$ and
$\sigma_2=\{\varphi\in\mathbb{C}:|\varphi|=C_2\delta\}$ for $0<C_1<C_2$.
We have from (\ref{chi_v_0})
\begin{equation}\label{z_pr}
z(0)=v_0,\quad z^\prime(0)=1, \quad 0<C_1<|z^\prime(\varphi)|<C_2,\quad \varphi\in \chi(\widetilde{\sigma}).
\end{equation}
According to Lemma \ref{l:min_L_ed}, $\Re V^{(+)}(v)\ge 0$ for $v\in U_\delta(v_0)$ and we get $\Re \varphi^3_j\ge
0$ for $\varphi_j\in \widetilde{U}_{\delta,\varphi}$, i.e.,
\[
\cos (3\arg\varphi_j)\ge 0, \quad \varphi_j\in \widetilde{U}_{\delta,\varphi},
\]
where $\widetilde{U}_{\delta,\varphi}$ is defined in (\ref{U_del,phi}). Hence, $\widetilde{U}_{\delta,\varphi}$ can be located only in
the sectors
\[
-\pi/6\le \arg\varphi\le\pi/6, \quad
\pi/2\le \arg\varphi\le 5\pi/6,\quad
7\pi/6\le \arg\varphi\le 3\pi/2.
\]
Besides, $\chi$ is conformal in $\widetilde{\sigma}$ (see (\ref{ogr_pr_chi})), hence angle-preserving.
Taking into account that $\chi(v)\in\mathbb{R}$ for $v\in\mathbb{R}$, the angle between $\omega_0$
and the real axis at the point $v_0$ is $\pi/2$, and that $\widetilde{U}_{\delta,\varphi}$ is a continuous curve, we obtain that
$\widetilde{U}_{\delta,\varphi}$ can be located only in the sectors
\begin{equation}\label{sectors}
\pi/2\le \arg\varphi\le 5\pi/6,\quad
7\pi/6\le \arg\varphi\le 3\pi/2.
\end{equation}
Note that we can take any curve $\widetilde{U}(\varphi)$ instead of $\widetilde{U}_{\delta,\varphi}$ provided that
$\widetilde{U}(\varphi)$ and $\omega_0\setminus U_\delta(v_0)$ are "glued", i.e., the union of $z(\widetilde{U}(\varphi))$ and
$\omega_0\setminus U_\delta(v_0)$ form a closed contour encircling $0$.
Let us take
\begin{multline*}
\widetilde{U}(\varphi)=\{\varphi\in\mathbb{C}:\arg \varphi=2\pi/3,\,\varphi\in \chi(\widetilde{\sigma})\}\\
\cup\{\varphi\in\mathbb{C}:\arg \varphi=4\pi/3,\,\varphi\in \chi(\widetilde{\sigma})\}
\cup U_{1,\delta}\cup U_{2,\delta},
\end{multline*}
where $\widetilde{\sigma}=\{v\in\mathbb{C}:|v-v_0|\le\delta\}$, $U_{1,\delta}$ is a curve along $\chi(\partial \widetilde{\sigma})$
from the point of intersection
of the ray $\arg\varphi=2\pi/3$ and $\chi(\partial \widetilde{\sigma})$ to the point $\varphi_{1,\delta}$ of intersection
of  $\widetilde{U}_{\delta,\varphi}$ and $\chi(\partial \widetilde{\sigma})$ ($\pi/2<\arg\varphi_{1,\delta}<5\pi/6$),
and $U_{2,\delta}$ is a curve along $\chi(\partial \widetilde{\sigma})$ from the point of intersection
of the ray $\arg\varphi=4\pi/3$ and $\chi(\partial \widetilde{\sigma})$ to the point $\varphi_{2,\delta}$ of intersection
of $\widetilde{U}_{\delta,\varphi}$ and $\chi(\partial \widetilde{\sigma})$ ($7\pi/6<\arg\varphi_{2,\delta}<3\pi/2$).
%
%
According to Lemma \ref{l:min_L_ed} and (\ref{v_sm}), $\Re \varphi^3_{1,\delta}=r^3\cos 3\varphi_0>C>0$, where
$r=|\varphi_{1,\delta}|$, $\varphi_0=\arg\varphi_{1,\delta}$. Since $0<C_1<r<C_2$, we have
\[
\cos 3\varphi_0\ge C/C_2^3>0.
\]
Moreover, it is easy to see that $\cos (3\arg\varphi_1)>\cos 3\varphi_0$ along $U_{1,\delta}$ (since
$\cos 3x$ is monotone increasing for $x\in[\pi/2,2\pi/3]$ and monotone decreasing for
$x\in [2\pi/3,5\pi/6]$).
This and $|\varphi_j|>C_1$ imply for $\varphi_j\in L_{1,\delta}$
\begin{equation}\label{ots_re}
\Re \left(\dfrac{\gamma^{-2}\varphi_j^3}{3}\right)>C>0,\quad \varphi_1\in U_{1,\delta}.
\end{equation}
Also we have from (\ref{z_pr})
\[
|z(\varphi_j)-v_0|\le C_2|\varphi_j|<C,\quad\varphi_j\in \chi(\widetilde{\sigma}).
\]
This, (\ref{ots_re}), $c_{m,n}-c=o(n^{-2/3})$, $m,n\to\infty$, and  (\ref{z_pr}) yield
\begin{equation}\label{ots_L_d}
\Big| \widehat{W}(\varphi_1,\ldots,\varphi_{2k})\Big|\le e^{-Cn+o(n)},\quad \varphi_1\in U_{1,\delta},\quad
\varphi_j\in \widetilde{U}(\varphi_j),\,\,j>1.
\end{equation}
Hence, the integral over $U_{1,\delta}$ does not contribute to the l.h.s. of (\ref{Int_okr2_ed})
The same statement we can prove for $U_{2,\delta}$.
Thus, we have shown that integral over $\widetilde{U}_{\delta,\varphi}$ in (\ref{Int_okr2_ed}) can be replaced
to the integral over the contour
\begin{equation}\label{l}
\widetilde{l}=\{\varphi\in\mathbb{C}:\arg \varphi=2\pi/3,\,\varphi\in \chi(\widetilde{\sigma})\}
\cup\{\varphi\in\mathbb{C}:\arg \varphi=4\pi/3,\,\varphi\in \chi(\widetilde{\sigma})\}.
\end{equation}
According to the choice of $\widetilde{l}$, we have
\begin{equation}\label{phi^3}
\Re \varphi_j^3=r_j^3,\quad \varphi_j\in \widetilde{l},
\end{equation}
where $r_j=|\varphi_j|$.

Set now $$\sigma_n=\{\varphi\in\mathbb{C}:|\varphi|\le \varepsilon_n^{-1/2}n^{-1/3}\},$$
where $\varepsilon_n$ is defined in (\ref{cond_m}).
Note that we assume that $\varepsilon_n^{-1/2}n^{-1/3}\to 0$, $n\to\infty$.
In other case we can take
$\sigma_n=\{\varphi\in\mathbb{C}:|\varphi|\le \log n/n^{1/3}\}$ and the proof will be similarly.

 It is easy to see that
 $\sigma_n\subset \chi(\widetilde{\sigma})$ for sufficiently big $n$. Besides,
we have from (\ref{z_pr}) for $\varphi\in\sigma_n$
\begin{equation}\label{razl_z}
\begin{array}{c}
z(\varphi)=v_0+\varphi+O(\varepsilon_n^{-1} n^{-2/3}),\,\,n\to\infty,\\
z^\prime(\varphi)=1+O(\varepsilon_n^{-1/2}n^{-1/3}),\,\,n\to\infty.
\end{array}
\end{equation}
 Taking into account (\ref{cond_m}), (\ref{razl_z}), (\ref{phi^3}), and
\[
\bigg|\log \dfrac{1-z(\varphi)}{1-v_0}\bigg|\le \Big|\dfrac{v_0-z(\varphi)}{1-v_0}\Big|,
\]
 we obtain for $\varphi_1\in \widetilde{l}\setminus \sigma_n$, $\varphi_j\in \widetilde{l}$, $j=2,..,2k$
\begin{equation}\label{ots_l_n}
\Big|\widehat{W}(\varphi_1,\ldots,\varphi_{2k})\Big|\le C_1e^{-C_2nr_1^3+C_3n^{1/3}r_1}
\end{equation}
where $r_1=|\varphi_1|\ge \varepsilon_n^{-1/2} n^{-1/3}$. Since $n^{1/3}r_1\ge \varepsilon_n^{-1/2}$ for $\varphi_1\in
\widetilde{l}\setminus \sigma_n$, the integral over $\widetilde{l}\setminus \sigma_n$ is $O(e^{-C\varepsilon_n^{-3/2}})$ as $n\to \infty$.
Hence,
\begin{equation}\label{Int_okr3_ed}
\dfrac{D_{2k}^{-1}(\lambda_+)}{(n\gamma_+)^{2k^2/3}}F_{2k}(\Lambda_{2k})=L_{2k}(1+o(1))
\int\limits_{\widetilde{l}\bigcap \sigma_n}
\widehat{W}(\varphi_1,\ldots,\varphi_{2k})
\prod\limits_{j=1}^{2k}d\,\varphi_j+O(e^{-C\varepsilon_n^{-3/2}}),
\end{equation}
where $L_{2k}$ and $\widehat{W}(\varphi_1,\ldots,\varphi_{2k})$ are defined in (\ref{Z_kp_ed}) and (\ref{Int_okr2_ed}).
This, (\ref{S}), and (\ref{razl_z}) imply
\begin{equation}\label{Int_okr4_ed}
\begin{array}{c}
\dfrac{D_{2k}^{-1}}{(n\gamma_+)^{2k^2/3}}F_{2k}(\Lambda_{2k})=
\frac{L_{2k}e^{2k(2k-1)\kappa_4}}{v_0^{4k^2}}
\int\limits_{\widetilde{l}\bigcap \sigma_n}
e^{-n\gamma^{-2}\sum\limits_{l=1}^{2k}\varphi_l^3/3+\sum\limits_{l=1}^{2k}\frac{n^{1/3}\xi_l}{\gamma^{2/3}}\varphi_l}
\dfrac{\triangle(\Phi)}
{\triangle(\widehat{\xi})}\\
\times (1+\delta(\varphi_1,\ldots,\varphi_{2k}))\prod\limits_{j=1}^{2k}
d\,\varphi_j+O(e^{-C\varepsilon_n^{-3/2}}),
\end{array}
\end{equation}
where $\delta(\varphi_1,\ldots,\varphi_{2k})$ collects the reminder terms which appear when we replace
$z(\varphi_j)\to v_0+\varphi_j+O(\varphi_j^2)$, $z^\prime(\varphi_j)\to 1+O(\varphi_j)$ and
$\log \frac{1-z(\varphi_j)}{1-v_0}\to\frac{\varphi_j}{1-v_0}+O(\varphi_j^2)$, $j=1,..,2k$. Hence
\begin{equation*}
|\delta(\varphi_1,\ldots,\varphi_{2k})|\le C(|\varphi_1|+\ldots+|\varphi_{2k}|).
\end{equation*}
Changing variables in (\ref{Int_okr4_ed}) as $\gamma^{-2/3} n^{1/3}\varphi_j\to i\varphi_j$ we obtain
in new variables
\begin{equation*}\label{delta}
|\widetilde{\delta}(\varphi_1,\ldots,\varphi_{2k})|=|\delta(i\gamma^{2/3} n^{-1/3}\varphi_1,\ldots,
 i\gamma^{2/3} n^{-1/3}\varphi_{2k})|\le Cn^{-1/3}(|\varphi_1|+\ldots+|\varphi_{2k}|).
\end{equation*}
Therefore, using (\ref{gam}), (\ref{Z_kp_ed}), and (\ref{v_pm_ed}), we obtain
\begin{align}\notag
\dfrac{D_{2k}^{-1}(\lambda_+)}{(n\gamma_+)^{2k^2/3}}F_{2k}(\Lambda_{2k})&=\frac{L_{2k}e^{2k(2k-1)\kappa_4}(1+o(n^{-1/3}))}
{v_0^{4k^2}(-i\gamma^{-2/3}n^{1/3})^{k(2k+1)}}\\ \notag
&\times\intd\limits_{S}
e^{i\sum\limits_{l=1}^{2k}\varphi_l^3/3+\sum\limits_{l=1}^{2k}i\xi_l\varphi_l}
\dfrac{\triangle(\Phi)}
{\triangle(\widehat{\xi})}
\prod\limits_{j=1}^{2k}d\,\varphi_j+O(e^{-C\varepsilon_n^{-3/2}})\\ \label{Int_okr5_ed}
&=e^{4k(k-1)\kappa_4}(-1)^{k^2}c^{k(k-1)/2}(1+o(n^{-1/3}))\\ \notag
&\times \dfrac{i^k}{(2\pi)^{2k}} \intd\limits_{S}
e^{i\sum\limits_{l=1}^{2k}\varphi_l^3/3+\sum\limits_{l=1}^{2k}i\xi_l\varphi_l}
\dfrac{\triangle(\Phi)}
{\triangle(\widehat{\xi})}
\prod\limits_{j=1}^{2k}d\,\varphi_j+O(e^{-C\varepsilon_n^{-3/2}})
,
\end{align}
where $S$ is defined in (\ref{Ai}).

Consider
\begin{multline}\label{K}
K(\widehat{\xi}):=\dfrac{i^{k}}{(2\pi)^{2k}}\intd\limits_{S}
e^{i\sum\limits_{l=1}^{2k}\varphi_l^3/3+\sum\limits_{l=1}^{2k}i\xi_l\varphi_l}
\triangle(\Phi)\prod\limits_{j=1}^{2k}d\,\varphi_j\\
=\dfrac{i^{k}}{(2\pi)^{2k}}\intd \det\bigg\{\varphi_l^{j-1}e^{i\varphi_l^3/3+i\xi_l\varphi_l}
\bigg\}_{j,l=1}^{2k}\prod\limits_{j=1}^{2k}d\,\varphi_j.
\end{multline}
Integrating by parts, we have for $j\ge 3$
\begin{multline*}
i\intd\limits_{S}\varphi_l^{j-1}e^{i\varphi_l^3/3+i\xi_l\varphi_l}d\,\varphi_l=
\intd\limits_{S}\varphi_l^{j-3}e^{i\xi_l\varphi_l} \dfrac{d}{d\varphi_l}
e^{i\varphi_l^3/3}d\,\varphi_l\\
=-\intd\limits_{S}((j-3)\varphi_l^{j-4}+i\xi_l\varphi_l^{j-3})e^{i\varphi_l^3/3+i\xi_l\varphi_l}d\,\varphi_l.
\end{multline*}
Applying this identity to each line, starting from third, we observe that the first term in the r.h.s. gives zero
contribution. Repeating this procedure and replacing  and rearranging the lines, we obtain from (\ref{K})
\begin{equation}\label{K1}
K(\widehat{\xi})=\dfrac{i^{k}}{(2\pi)^{2k}}\intd (-1)^{k(k-1)/2}\det\bigg\{\varphi_l^{q_j}\xi_l^{r_j}e^{i\varphi_l^3/3+i\xi_l\varphi_l}
\bigg\}_{j,l=1}^{2k}\prod\limits_{j=1}^{2k}d\,\varphi_j,
\end{equation}
where $j=q_jk+r_j$, $q_j=0,1$, $r_j=0,1,..,k-1$. Thus,
\begin{equation}\label{K2}
\begin{array}{c}
\dfrac{K(\widehat{\xi})}{\Delta(\widehat{\xi})}=\dfrac{i^{k}}{(2\pi)^{2k}}
\intd\limits_{S} \sum\limits_{j_1< ...< j_k}\dfrac{\triangle(\xi_{j_1},..,\xi_{j_k})\overline{\triangle}
(\xi_{j_1},..,\xi_{j_k})\prod\limits_{s=1}^k\varphi_{j_s}}
{(-1)^{j_1+...+j_k}\Delta(\xi_1,..,\xi_{2k})}e^{i\sum\limits_{l=1}^{2k}\varphi_l^3/3+\sum\limits_{l=1}^{2k}i\xi_l\varphi_l}
\prod\limits_{j=1}^{2k}d\,\varphi_j\\
=\dfrac{i^{k}}{(2\pi)^{2k}}\intd\limits_{S} \sum\limits_{j_1< ...< j_k}\dfrac{(-1)^{k(k+1)/2}\prod\limits_{s=1}^k\varphi_{j_s}}
{\prod\limits_{s=1}^k\prod\limits_{t\ne j_1,..,j_k}(\varphi_{j_s}-\varphi_t)\,\hbox{sign}\,(t-j_s)}e^{i\sum\limits_{l=1}^{2k}\varphi_l^3/3+\sum\limits_{l=1}^{2k}i\xi_l\varphi_l}
\prod\limits_{j=1}^{2k}d\,\varphi_j,
\end{array}
\end{equation}
where the sum is over all collections $1\le j_1< \ldots < j_k\le 2k$ and $\overline{\triangle}
(\xi_{j_1},..,\xi_{j_k})$ is the Vandermonde determinant of $\{\xi_j\}$ with $j\ne j_1,..,j_k.$
 Consider
\begin{equation}\label{ker}
\dfrac{(-1)^{k^2}}{(2\pi)^{2k}}\intd\limits_{S}\frac{\det\bigg\{\dfrac{i\varphi_j-i\varphi_{l+k}}{\xi_j-\xi_{l+k}}\bigg\}_{j,l=1}^k }
{\Delta(\xi_1,..,\xi_{k})
\Delta(\xi_{k+1},..,\xi_{2k})}
e^{i\sum\limits_{l=1}^{2k}\varphi_l^3/3+\sum\limits_{l=1}^{2k}i\xi_l\varphi_l}\prod\limits_{j=1}^{2k}d\,\varphi_j.
\end{equation}
According to the identity (\ref{iden}) the coefficient of $\prod\limits_{s=1}^k\varphi_{j_s}$ in (\ref{ker}) is
\[
\dfrac{i^{k}}{(2\pi)^{2k}}\cdot\dfrac{(-1)^{k(k+1)/2}}
{\prod\limits_{s=1}^k\prod\limits_{t\ne j_1,..,j_k}(\varphi_{j_s}-\varphi_t)\,\hbox{sign}\,(t-j_s)}.
\]
Thus, $K(\widehat{\xi})/\Delta(\widehat{\xi})$ is equal to (\ref{ker}), and (\ref{Int_okr5_ed})
yield the assertion of Theorem \ref{thm:2}.
%

\end{document}